\newcommand{\beq}{\begin{equation}}
\newcommand{\eeq}{\end{equation}}
\newcommand{\beqq}{\begin{equation*}}
\newcommand{\eeqq}{\end{equation*}}
\newcommand{\f}{\begin{equation}}
\newcommand{\ff}{\end{equation}}
\newtheorem{theorem}{Theorem}
\newtheorem{lemma}{Lemma}
\newtheorem{conjecture}{Conjecture}
\begin{document}

%%%%%%%%%%%%%%%%%%%%%%%%%%%%%%%%%%%%%%%%%%%%%%%%
\title{Conserved Quantities and the Algebra of Braid Excitations in Quantum Gravity}
\author{Song He\thanks{Email address:
hesong@pku.edu.cn}\\ \\School of Physics, Peking University,
Beijing, 100871, China, \\ \\ \\Yidun Wan\thanks{Email address:
ywan@perimeterinstitute.ca}
\\ \\
Perimeter Institute for Theoretical Physics,\\
31 Caroline st. N., Waterloo, Ontario N2L 2Y5, Canada, and \\
Department of Physics, University of Waterloo,\\
Waterloo, Ontario N2J 2W9, Canada\\}
\date{February 15, 2008}
\maketitle
\vfill
\begin{abstract}
We derive conservation laws from interactions of braid-like
excitations of embedded framed spin networks in Quantum Gravity. We
also demonstrate that the set of stable braid-like excitations form
a noncommutative algebra under braid interaction, in which the set
of actively-interacting braids is a subalgebra.
\end{abstract}
\vfill
\newpage
\tableofcontents
\newpage

\section{Introduction}
Recently, there has been a significant amount of work done towards a
quantum theory of gravity with matter as topological
invariants\cite{Bilson-Thompson2006, Hackett2007, LouNumber,
Wan2007, LeeWan2007, HackettWan2008, LeeWan2008, HeWan2008a}.
\cite{Bilson-Thompson2006, Hackett2007, LouNumber} work on framed
three-valent spin networks present in models related to Loop Quantum
Gravity with non-zero cosmological constant\cite{Major1995,
Smolin2002}, in which the topological invariants of ribbon braids
are able to detect chirality and code chiral conservation laws.
However, the results of this approach have a serious limitation in
the sense that there is no dynamics of the conserved
quantities\cite{Hackett2007}.

To resolve this limitation, a new approach based on embedded
four-valent spin networks is proposed in \cite{Wan2007}, and is
shown to have dynamics built in by means of the so-called dual
Pachner moves\cite{LeeWan2007}. Here the four-valent spin networks
can be understood as those naturally occur in spin foam
models\cite{spin-foam}, or in a more generic context as the original
proposal of spin networks put forward by Penrose\cite{Penrose}, plus
embedding.

The dynamical objects found by the new approach are three-strand
braids, each of which is formed by three common edges of two
adjacent nodes of the embedded four-valent spin network. The stable
three-strand braids, under certain stability condition, are local
excitations\cite{LeeWan2007, Isabeau2008}. Among all stable braids,
there is a small class of braids which are able to propagate on the
spin network. The propagation of these braids are chiral, in the
sense that some braids can only propagate to their left with respect
to the local subgraph containing the braids, while some only
propagate to their right and some do both\cite{Wan2007, LeeWan2007}.
There is another small class of braids, the actively-interacting
braids (hereafter called "active braids" for short); each is two-way
propagating and is able to merge with its neighboring braid when the
interaction condition is met\cite{LeeWan2007}. In the sequel, braids
that are not active are called passive, including stationary braids,
i.e. those do not even propagate.

\cite{Wan2007, LeeWan2007} are based on a graphic calculus developed
therein. However, although the graphic calculus has its own
advantages - in particular in describing, e.g. the full procedure of
the propagation a braid, it is not very convenient for finding
conserved quantities of a braid which are useful to characterize the
braid as a matter-like local excitation. In view of this,
\cite{HackettWan2008} proposed an algebraic notation of the active
braids and derived conserved quantities by means of the new
notation.

To these ends, in this paper, we generalize the algebraic notation
in\cite{HackettWan2008} to the case of generic three-strand braids.
Within this notation, the algebraic equivalence moves are defined
and the quantities conserved under these are identified. Finally the
algebra of interactions between active braids and passive braids is
discussed. This leads to the following results:
\begin{enumerate}
\item There exist conserved quantities under interactions and we are
able to show the form of these conservation laws.
\item Precise algebraic forms of braid interactions are presented.
\item The set of all stable braids form an algebra under braid
interaction, in which the set of all active braids is the
subalgebra.
\item This algebra is noncommutative due to the fact that the left
and right interactions of an active braid onto another braid are not
the same in general. Conditions of commutative interactions are
explicitly given.
\item Asymmetric interactions can be related by discrete
transformation, such as P, T, CP, and CT.
\end{enumerate}

An immediate application of these results is realized in a companion
paper\cite{HeWan2008a} which discovers the C, P, and T
transformations of braids by means of conserved quantities found in
\cite{HackettWan2008} and in this paper.

\section{Notation}%
We will extend the algebraic notation of active braids to the
general case, namely to propagating braids and in fact all braids.
However, for illustrative purposes we keep the graphic notation
wherever necessary. We adopt the graphical notation we proposed in
\cite{Wan2007, LeeWan2007}. A generic 3-strand braid is shown in
Fig. \ref{braid}(a), while a concrete example is depicted in Fig.
\ref{braid}(b). More precisely, what are shown in Fig. \ref{braid}
are braid diagrams as projections of the true 3-strand braids
embedded in a topological three manifold. Each spin network can be
embedded in various ways, some of which are diffeomorphic to each
other. The projection of a specific embedding of a braid is called a
braid diagram; many braid diagrams are equivalent and belong to the
same equivalence class, in the sense that they correspond to the
same braid and can be transformed into each other by equivalence
moves\cite{Wan2007}. Thus a braid refers to the whole equivalence
class of its braid diagrams. However, one can choose a braid diagram
of an equivalence class as the representative of the class, we
therefore will not distinguish a braid from a braid diagram in the
sequel unless an ambiguity arises. Besides, a braid always means a
3-strand braid.

\begin{figure}[h]
\begin{center}
\includegraphics[
height=2.1318in,
width=2.8349in ]{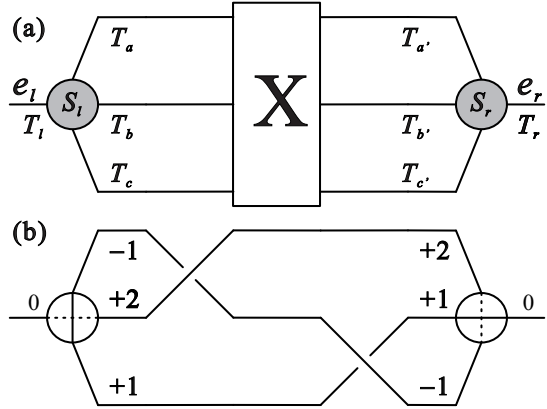}%
\caption{(a) is a generic 3-strand braid diagram formed by the three
common edges of two end-nodes. $S_l$ and $S_r$ are the states of the
left and right \textbf{end-node}s respectively, taking values in $+$
or $-$. $X$ represents a sequence of crossings, from left to right,
formed by the three strands between the two nodes. $T_a$, $T_b$, and
$T_c$ are the \textbf{internal twists} respectively on the three
strands from top to bottom, on the left of $X$. $T_l$ and $T_r$,
called \textbf{external twists}, are respectively on the two
external edges $e_l$ and $e_r$. All twists are valued in
$\mathbb{Z}$ in units of $\pi/3$\cite{Wan2007}. (b) is a concrete
example of a braid diagram, in which the left end-node is in the
'$+$' state while the right
end-node is in the '$-$' state.}%
\label{braid}
\end{center}
\end{figure}

It is important to emphasize the choice of the representative of an
equivalence class of braid diagrams. In \cite{LeeWan2007}, each
equivalence class of braid diagrams is represented by its unique
element which has zero external twists (see Fig. \ref{braid}(b) for
an example). This choice makes the propagation and interaction of
braids defined in\cite{LeeWan2007} easier to handle. However, there
are three types of stable braids, viz active braids, propagating
braids, and stationary braids\cite{LeeWan2007, LeeWan2008}.
\textbf{Propagating} braids are able to exchange places with their
adjacent substructures in the graph under the local dynamical moves,
whereas the \textbf{stationary braids} cannot propagate in the way
the propagating braids do. These braids are in most cases
represented by braid diagrams of zero external twists.

On the other hand, as pointed out in \cite{LeeWan2007,
HackettWan2008}, the \textbf{active} braids, each of which can
propagate and can interact onto any other braid in the sense that it
can merge with another adjacent braid to form a new braid as long as
the interaction condition is met, are happen to be both
\textbf{completely left-} and \textbf{right-reducible}, i.e. such a
braid is always equivalent to a trivial braid diagram with possibly
twists on its three strands and two external edges. Thus it is more
convenient to represent each of these braids by a trivial braid
diagram (which is not unique) of the corresponding equivalence
class. In fact, \cite{HackettWan2008} chose this representation and
derived conserved quantities of this type of braids under
interaction by introducing an algebraic notation of them and a
symbolic way of handling the interactions of them. This trivial
representation of active braids is actually a special case of the
so-called extremal representation of generic braids\cite{Wan2007}. A
braid diagram as an extremal representation of a braid is called an
\textbf{extremum} of the braid. The name-extremum-manifests per se
its meaning: a braid diagram with least number of crossings, among
all braid diagrams in the same equivalence class.

Therefore, our generalized algebraic notation of a braid should take
care of all possible choices of representative of a braid,
especially the aforementioned special representations. Note that,
the class of propagating braids excludes any active braid, which is
also propagating though.

\begin{figure}
[h]
\begin{center}
\includegraphics[
height=0.9574in, width=2.8193in
]%
{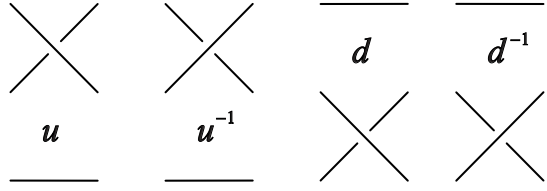}%
\caption{The generators of braid group $B_3$, and hence of the
crossing sequence of a generic 3-strand braid.}%
\label{Xing}
\end{center}
\end{figure}

Let us now concentrate on the generic case depicted in Fig.
\ref{braid}(a). Apart from the internal twists, the interior of a
braid, which is the region between the two end-nodes and is
characterized by the sequence of crossings, satisfies the definition
of an ordinary braid, arranged horizontally. We can thus denote the
sequence of crossings $X$ by generators of the braid group $B_3$.
The group $B_3$ has two generators and their inverses. Since we
arrange a braid diagram horizontally, the generator and its inverse
formed by the upper two strand of a braid are named $u$ and $u^{-1}$
respectively, while the one and its inverse formed by the lower two
strand of a braid are $d$ and $d^{-1}$ respectively. This convention
is illustrated in Fig. \ref{Xing}. Then for example, the crossing
sequence in Fig. \ref{braid}(b) reads $X=u^{-1}d$, from left to
right. We also assign an integral value, the crossing number, to
each generator, i.e. $u=d=1$ and $u^{-1}=d^{-1}=-1$.

For an arbitrary sequence $X$ of \textbf{order} $n=|X|$, namely the
number of crossings, we can write $X=x_1x_2\cdots x_i\cdots x_n$,
where $x_i\in\{u,u^{-1},d,d^{-1}\}$ represents the $i$-th crossing
from the left. Therefore, each $x_i$ in $X$ has a two-fold meaning:
on the one hand, it is an abstract crossing; on the other hand, it
represents an integral value, $1$ or $-1$. When a $x_i$ appears in a
multiplication it is usually understood as an abstract crossing,
while in a summation it is normally an integer. Note that, as
generators of the group $B_3$, the generators of $X$ obey the
following equivalence relations.
\begin{equation}
\begin{aligned}
udu^{-1}&=d^{-1}ud\\
u^{-1}du&=dud^{-1}\\
udu&=dud
\end{aligned}
\end{equation}
We assume in any $X$, the above equivalence relations have been
applied to remove any pair of a crossing and its inverse. For
example, the sequence $udu^{-1}d^{-1}$ should have been written as
$udu^{-1}d^{-1}=d^{-1}udd^{-1}=d^{-1}u$ by the first relation above.
The crossing sequence $X$ clearly induces a permutation, denoted by
$\sigma_X$, of the three strands of a braid. It is obvious that the
induced permutation $\sigma_X$ takes value in $S_3$, the permutation
group of the set of three elements. In terms of disjoint cycles,
\beq S_3=\{\mathds{1},(1\ 2), (1\ 3), (2\ 3), (1\ 2\ 3), (1\ 3\
2)\}.\label{eqS3}\eeq More precisely, the three internal twists in a
triple $(T_a,T_b,T_c)$ on the left of the $X$ are permutated by the
induced permutation into
$(T_a,T_b,T_c)\sigma_X=(T_{a'},T_{b'},T_{c'})$, where
$(T_{a'},T_{b'},T_{c'})$ is the triple of the internal twists on the
right of $X$. Here $\sigma_X$ is defined to be a \emph{left-acting}
function on the triple of internal twists for two reasons. Firstly,
this is a convention of permutation group. Secondly, when there is
another crossing sequence, say $X'$, appended to the right of $X$,
which usually happens in interactions of braids, we naturally have
$(T_a,T_b,T_c)\sigma_X\sigma_{X'}=(T_a,T_b,T_c)\sigma_{XX'}$, such
that the induced permutation of the newly appended crossings is
applied after the action of $\sigma_X$.

On the other hand,
$(T_a,T_b,T_c)=\sigma^{-1}_X(T_{a'},T_{b'},T_{c'})$, where
$\sigma^{-1}_X$ is the inverse of $\sigma_X$ and is a
\emph{right-acting} function of the triple. One should keep in mind
that the indices of internal twists such as $T_a$ and $T_a'$, $a$
and $a'$ are abstract and have no meaning until their values and
positions in the triple of internal twists are fixed. So $(T_a, T_b,
T_c)=(T_d, T_e, T_f)$ means respectively $T_a=T_d,T_b=T_e$, and
$T_c=T_f$. In the rest of the paper, we will also consider the
addition of two triples of twists, i.e. $(T_a, T_b, T_c)+(T_d, T_e,
T_f)=(T_a+T_d, T_b+T_e, T_c+T_f)$. Therefore, we can denote a
generic braid as Fig. \ref{braid}(a) by
$$\left._{T_l}^{S_l}\hspace{-0.5mm}[(T_a,T_b,T_c)\sigma_X]^{S_r}_{T_r}\right.,$$
or equally well by
$$\left._{T_l}^{S_l}\hspace{-0.5mm}[\sigma^{-1}_X(T_{a'},T_{b'},T_{c'})]^{S_r}_{T_r}\right..$$
In such a way, which side of the crossing sequence a triple of
internal twists is on is transparent. For instance, the braid in
Fig. \ref{braid}(b) can be written as
$\left._{\hspace{0.5mm}0}^+\hspace{-0.2mm}[(-1,2,1)\sigma_{u^{-1}d}]^-_0\right.$
or
$\left._{\hspace{0.5mm}0}^+\hspace{-0.2mm}[\sigma^{-1}_{u^{-1}d}(2,1,-1)]^-_0\right.$,
where $\sigma_{u^{-1}d}=(1\ 3\ 2)$ and $\sigma^{-1}_{u^{-1}d}=(1\ 2\
3)$.

It is now manifest that a generic braid is characterized by the
8-tuple, $\{T_l,S_l,T_a,T_b,T_c,\\X,S_r,T_r\}$. As mentioned before,
$S_l$ and $S_r$ are just signs, $+$ or $-$, such that $-(+)=-$ and
$-(-)=+$. Hence, for an arbitrary end-node state $S$, we may use
both $-S$ and $\bar{S}$ for the inverse of $S$. In fact, this
8-tuple is not completely arbitrary for different type of braids.
For a propagating braid $B$ of $n$ crossings represented by the
braid diagrams with no external twists we have the following
constraints.
\begin{enumerate}
\item $T_l=T_r=0$.
\item The triple $(S_l,X,S_r)$ is not arbitrary. If $B$ is (\textbf{left-})
\textbf{right-propagating}, according to \cite{Wan2007, LeeWan2007},
$B$ must be (\textbf{left-}) \textbf{right-reducible}, in particular
its first crossing on the (left) right can be eliminated by the
equivalence move, a $\pi/3$ rotation which also flips the (left)
right end-node. That is, letting $X=x_1x_2\cdots x_{n-1}x_n$, then
under a $\pi/3$ rotation on the (left) right end-node, the triple
$(S_l,X,S_r)$ becomes ($(\bar{S_l}, x_2\cdots x_{n-1}x_n, S_r)$)
$(S_l, x_1x_2\cdots x_{n-1}, \bar{S_r})$.
\item The triple $(T_a,T_b,T_c)$ is not arbitrary; however, the
general pattern of them, ensuring the propagation of $B$, has not
yet been found and is under investigation. But the algebra
formulated and the conserved quantities to be found in this paper
may turn out to be helpful to resolve this problem.
\end{enumerate}
Any braid whose characterizing 8-tuple violates the above
constraints is not propagating

It is useful in certain situation to represent a propagating braid
by its \textbf{extrema} in the corresponding equivalent class, which
are the braid diagrams of the least number of crossings, obtained by
rotations of the unique representative with zero external twists,
getting rid of the reducible crossings\cite{Wan2007}. However, we
would like to leave the discussion of this representation to the
next section after we defined rotations symbolically.

For an active braid, we choose to represent it by its extrema, i.e.
its trivial braid diagrams with external twists. This has actually
been carried out in \cite{HackettWan2008}, we thus will not repeat
any detail here. It is good to see our notation of the generic case
reduces to that of the active braids defined in
\cite{HackettWan2008}. For active braids represented by trivial
diagrams, the crossing sequence is trivial and hence the induced
permutation is the identity, i.e. $\sigma_X=\mathds{1}$. There is no
difference between the triple of internal twists on the left of $X$
and the one on the right. Moreover, we have $S_l=S_r=S$ in this
case\cite{HackettWan2008}. As a result, the generic notation
uniquely boils down to
$$\left._{T_l}^{S}\hspace{-0.5mm}[T_a,T_b,T_c]^{S}_{T_r}\right.,$$
which is the very notation introduced in \cite{HackettWan2008} for
active braids in their trivial representations.

\section{Algebra of equivalence moves: symmetries and relations}%
\label{secEquiv}%
Since an active braid can always be reduced to trivial braids with
twists, it is sufficient to discuss the algebra of simultaneous
rotations for these trivial braids. In \cite{HackettWan2008}, we
have found general effects of simultaneous rotations on trivial
braids, especially conserved quantities under this class of
equivalence moves. For generic braids, however, we need to consider
more generalized rotations of a braid, denoted by $R_{m,n}$ with
$m,n\in\mathbb{Z}$, which is the combination of an $m\pi/3$ rotation
on the left end-node and an $n\pi/3$ one of the right end-node of
the braid. Let us record the algebraic form of such a rotation on a
generic braid, and then explain it,
\begin{align}
&R_{m,n}(_{T_l}^{S_l}\hspace{-0.5 mm}[ (T_a,T_b,T_c)\sigma_X] _{T_r}^{S_r})\nonumber\\
&=_{\ \ T_l+ m}^{(-)^mS_l}\hspace{-1.5 mm}[
(P^{S_l}_m(T_a-m-n,T_b-m-n,T_c-m-n))\sigma_{X_l(S_l,m)XX_r(S_r,n)}]
_{T_r+ n}^{(-)^nS_r}.%
\label{eqRotAlgebraic1}
\end{align}

On the RHS of Eq. \ref{eqRotAlgebraic1}, the original end-node
states, $S_l$ and $S_r$, become $(-)^m S_l$ and $(-)^n S_r$
respectively. This is because by \cite{Wan2007}, a $\pi/3$ rotation
of a node always flips the state of the node once, which means a
$m\pi/3$ rotation should flip the state of a node $m$ times. Also
according to \cite{Wan2007}, a rotation of the left (right) end-node
of a braid creates a crossing sequence, appended to the left (right)
of the original crossing sequence of the braid. In Eq.
\ref{eqRotAlgebraic1}, the newly-generated sequence on the left is
denoted by a function $X_l(S_l,m)$, depending on the original left
end-node state and the amount of rotation, $m$. Likewise, the new
crossing sequence on the right is denoted by the function
$X_r(S_r,n)$. We will elaborate these two functions shortly. As a
consequence, the induced permutation by the crossing sequence
changes accordingly, from $\sigma_X$ to
$\sigma_{X_l(S_l,m)XX_r(S_r,n)}$.

In addition, the left triple of internal twists is affected by the
rotation of the left end-node, which induces a permutation
$P^{S_l}_m$ on the triple, determined by the original end-node state
and the amount of rotation $m$. This function, which obviously takes
its value in the group $S_3$ shown in Eq. \ref{eqS3}, is the same as
the that induced by a simultaneous rotation on active braids,
defined in \cite{HackettWan2008}. One may wonder why the similar
permutation induced by the rotation on the right end-node does not
appear in Eq. \ref{eqRotAlgebraic1}. This is due to the advantage of
our notation which needs the triple of internal twists on one side,
while the triple on the other side is taken care of by the
permutation $\sigma$. If one indeed wants to have the triple of
internal twists beside the right end-node explicit, one can use the
alternative form of the rotation as follows instead.
\begin{align}
&R_{m,n}(_{T_l}^{S_l}\hspace{-0.5 mm}[\sigma^{-1}_X (T_{a'},T_{b'},T_{c'})] _{T_r}^{S_r})\nonumber\\
&=_{\ \ T_l+ m}^{(-)^mS_l}\hspace{-1.5 mm}[
\sigma^{-1}_{X_l(S_l,m)XX_r(S_r,n)}(P^{S_r}_{-n}(T_{a'}-m-n,T_{b'}-m-n,T_{c'}-m-n))]
_{T_r+ n}^{(-)^nS_r},%
\label{eqRotAlgebraic2}
\end{align}

Finally, the common increment of $-m-n$ of all internal twists, and
the changes of the two external twists under the rotation $R_{m,n}$
in Eq. \ref{eqRotAlgebraic1} and Eq. \ref{eqRotAlgebraic2} are
simple effects of the rotation\cite{Wan2007}.

We now explain more about these functions. Since the permutations
$P^S_m$ here are the same as those defined in \cite{HackettWan2008},
we adopt the following lemma from \cite{HackettWan2008} which states
the general relations they satisfy; a proof of this lemma can be
found in the reference.
\begin{lemma}
\begin{align*}
P^S_{2n}P^S_{-2n} &\equiv\mathds{1}\\
P^S_{2n+1}P^{\bar{S}}_{-(2n+1)} &\equiv\mathds{1}\\
P^S_{n} &\equiv P^{\bar S}_{-n},
\end{align*}
where $n\in\mathbb{Z}$. \label{lemmPermRel}
\end{lemma}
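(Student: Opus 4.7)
The plan is to derive the three identities directly from the geometric description of an elementary $\pi/3$ rotation of an end-node, as introduced in \cite{Wan2007}. First I would recall that a single $\pi/3$ rotation of a node in state $S$ simultaneously (i) induces a specific cyclic permutation of the three attached strands whose handedness depends on $S$, and (ii) flips the node's state from $S$ to $\bar{S}$. The permutation $P^S_m$ is then the $m$-fold composition of elementary rotation permutations, where after every elementary step the state label in use is updated by $S\to\bar{S}$.

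For the first relation $P^S_{2n}P^S_{-2n}\equiv\mathds{1}$, I would invoke the geometric fact that a rigid rotation composed with its negative is the identity transformation, so the induced permutation of the triple of internal twists must also be the identity. The label $S$ is valid for the second factor because after the first factor the state has been flipped $2n$ times, i.e., it is $(-)^{2n}S=S$. The second relation $P^S_{2n+1}P^{\bar{S}}_{-(2n+1)}\equiv\mathds{1}$ is the analogous statement for odd parity: after the first factor the state is $(-)^{2n+1}S=\bar{S}$, so the geometric inverse rotation must carry the label $\bar{S}$ in order to be applicable.

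For the third relation $P^S_n\equiv P^{\bar{S}}_{-n}$, I would argue that flipping $S\to\bar{S}$ reverses the handedness of the cyclic permutation assigned to an elementary rotation, so reversing the sign of the rotation angle exactly compensates for this swap. Concretely, I would first check the base case $n=1$ by writing $P^+_1$ and $P^-_{-1}$ as explicit elements of $S_3$ using the rotation rules of \cite{Wan2007}, and verifying their equality; the general case then follows by induction on $|n|$, composing one elementary rotation at a time and invoking the state-flip rule at each step.

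The main obstacle is the bookkeeping of the state label through a long chain of elementary rotations, since the elementary permutations corresponding to the two states $S$ and $\bar{S}$ are distinct and generally do not commute, so that naive cancellation of $P^S_m$ with $P^S_{-m}$ can fail when the parity of $m$ is wrong. The three identities are in fact tightly coupled: once the correct elementary permutations are identified, each serves as the cancellation rule used in proving the others. A clean organization is therefore to establish the third relation first from the base case, and then obtain the first two as consequences of inserting the third identity inside a composition that is manifestly a rigid rotation followed by its inverse.
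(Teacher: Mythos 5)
The paper does not actually prove this lemma: it is imported verbatim from \cite{HackettWan2008}, with the proof explicitly deferred to that reference, so there is no in-paper argument to compare yours against. Your reconstruction is nevertheless sound and consistent with everything the paper does state. The composition rule you describe --- $P^S_m$ as an $m$-fold product of elementary permutations with the state label flipped after each step, so that $P^{(-)^mS}_{n}P^S_m=P^S_{m+n}$ --- is exactly the rule used implicitly in the proof of Lemma \ref{lemmIntIndepTrivB}, and your base case for the third identity is confirmed by the explicit values $P^+_1=P^-_{-1}=(1\ 2)$ and $P^+_{-1}=P^-_1=(2\ 3)$ in Eq. \ref{eqPermrelation}. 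The first two identities then follow from your ``rotation followed by its inverse'' observation once one checks that the state label carried by the second factor is the state actually reached after the first factor acts ($S$ after an even rotation, $\bar S$ after an odd one), which you do correctly. One minor imprecision: the elementary ($|m|=1$) permutations are transpositions rather than cyclic permutations with a handedness, as Eq. \ref{eqPermrelation} shows; this does not affect your argument, since you verify the base case explicitly and the inductive step for $P^S_n\equiv P^{\bar S}_{-n}$ is purely formal.
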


Besides, the equations below are easy to
derive\cite{HackettWan2008}; they are listed here for possible
future use.
\begin{equation}
\begin{aligned}
P^+_1=P^-_{-1} &=(1\ 2)\\
P^+_{-1}=P^-_1 &=(2\ 3)\\
P^+_2=P^-_{-2} &=(1\ 3\ 2)\\
P^+_{-2}=P^-_2 &=(1\ 2\ 3)\\
P^{\pm}_{6n+3} &\equiv(1\ 3)\\
P^{\pm}_{6n} &\equiv\mathds{1},
\end{aligned}
\label{eqPermrelation}
\end{equation}
where $n\in\mathbb{Z}$.

According to the graphic definitions of rotations in \cite
{Wan2007}, we found that $X_l(S,m)$ and $X_r(S,n)$ have the
following general algebraic forms.
\begin{eqnarray}
X_l(+,m)&=
\begin{cases}  (ud)^{-m/2}& \text{if $m$ is even,}
\\
d(ud)^{(-1-m)/2} &\text{if $m$ is odd.}
\end{cases}
X_l(-,m)=
\begin{cases}  (du)^{-m/2}& \text{if $m$ is even,}
\\
u(du)^{(-1-m)/2} &\text{if $m$ is odd.}
\end{cases}\\
X_r(+,n)&=
\begin{cases}  (ud)^{-n/2}& \text{if $n$ is even,}
\\\nonumber
(ud)^{(1-n)/2}d^{-1} &\text{if $n$ is odd.}
\end{cases}
X_r(-,n)=
\begin{cases}  (du)^{-n/2}& \text{if $n$ is even,}
\\
(du)^{(1-n)/2}u^{-1} &\text{if $n$ is odd.}
\end{cases}
\label{eqCrossing}
\end{eqnarray}
where $n,m\in\mathbb{Z}$. If an exponent in Eq. \ref{eqCrossing} is
positive, it means, for example, $(ud)^2=udud$. We utilize a
definition in \cite{HeWan2008a}, which is, for a crossing sequence
$X=x_1\cdots x_i\cdots x_N,\ N\in\mathbb{N}$, $X^{-1}=x^{-1}_N\cdots
x^{-1}_i\cdots x^{-1}_1$. Given this, the meaning of the negative
exponents in Eq. \ref{eqCrossing} is clear. For instance,
$(ud)^{-2}=d^{-1}u^{-1}d^{-1}u^{-1}$.

It is obvious that the number of crossings of either $X_l(S_l,m)$ or
$X_r(S_r,m)$ does not depend on the end-node state,
\begin{equation}
|X_l(+,m)|=|X_l(-,m)|=|X_r(+,m)|=|X_r(-,m)|=|m|, \label{eqRotNcro}
\end{equation}
neither does the sum of crossing numbers of $X_l(S_l,m)$ or
$X_r(S_r,m)$, namely
\begin{equation}
\sum\limits_{i=1}^{|m|}x_i
=\sum\limits_{i=1}^{|m|}y_i=\sum\limits_{i=1}^{|m|}z_i=\sum\limits_{i=1}^{|m|}w_i=-m.
\label{eqRotXvalue}
\end{equation}
where, $x_i\in X_l(+,m)$, $y_i\in X_l(-,m)$, $z_i\in X_r(+,m)$, and
$w_i\in X_r(-,m)$. In addition, there is a useful relation between
$X_l(S,m)$ and $X_r(S,n)$, as stated in the following Lemma.
\begin{lemma}
$X_l(S,m)X_r(S,-m)\equiv\mathbb{I}$. $\mathbb{I}$ stands for no
crossing.
\label{lemmCrossingRelation}
\end{lemma}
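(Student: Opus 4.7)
The plan is to verify the identity by direct substitution of the explicit formulas in Eq.~\ref{eqCrossing}, exploiting the fact that only the parity of $m$ and the value of $S$ matter. First I would observe a symmetry in Eq.~\ref{eqCrossing}: the formulas for $X_l(-,\cdot)$ and $X_r(-,\cdot)$ are obtained from those for $X_l(+,\cdot)$ and $X_r(+,\cdot)$ by interchanging $u$ and $d$. Since this swap preserves any word and its formal inverse, the $S=-$ case of the lemma will follow from the $S=+$ case by the same computation, so it suffices to treat $S=+$.

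For $S=+$ I would split into two sub-cases according to the parity of $m$; note that $-m$ has the same parity as $m$, so both factors $X_l(+,m)$ and $X_r(+,-m)$ fall into the same branch of Eq.~\ref{eqCrossing}. In the even case one gets
\[
X_l(+,m)\,X_r(+,-m)\;=\;(ud)^{-m/2}\,(ud)^{m/2},
\]
which collapses to $\mathbb{I}$ by the inverse convention $X^{-1}=x_N^{-1}\cdots x_1^{-1}$ stated just before the lemma, since $(ud)^{-m/2}$ is by definition the word-inverse of $(ud)^{m/2}$. In the odd case, setting $k=(-1-m)/2$ so that $(1+m)/2=-k$, one gets
\[
X_l(+,m)\,X_r(+,-m)\;=\;d\,(ud)^{k}\,(ud)^{-k}\,d^{-1}\;=\;d\,d^{-1}\;=\;\mathbb{I},
\]
again by the inverse convention.

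Applying the $u\leftrightarrow d$ symmetry then yields the two $S=-$ cases: $X_l(-,m)X_r(-,-m)=(du)^{-m/2}(du)^{m/2}=\mathbb{I}$ for $m$ even, and $u\,(du)^{k}(du)^{-k}u^{-1}=uu^{-1}=\mathbb{I}$ for $m$ odd. This exhausts all cases and proves the identity.

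There is no real obstacle here; the only subtlety worth stating carefully is the interpretation of negative exponents, which is precisely the point where the convention $X^{-1}=x_N^{-1}\cdots x_1^{-1}$ is invoked. The lemma is essentially a bookkeeping fact that rotations of the two end-nodes by opposite amounts generate crossing sequences that are formally inverse words in $B_3$.
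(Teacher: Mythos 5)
Your proof is correct and follows essentially the same route as the paper's: a direct case split on the parity of $m$, substituting the explicit formulas of Eq.~\ref{eqCrossing} and cancelling the formally inverse words. The only cosmetic difference is that you derive the $S=-$ cases from the $S=+$ cases via the $u\leftrightarrow d$ swap, whereas the paper simply writes out all four cases explicitly.
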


\begin{proof}
For $m$ even,
\begin{eqnarray*}
X_l(+,m)X_r(+,-m)=(ud)^{-m/2}(ud)^{m/2}=\mathbb{I},\\
X_l(-,m)X_r(-,-m)=(du)^{-m/2}(du)^{m/2}=\mathbb{I};
\end{eqnarray*}
for $m$ odd,
\begin{eqnarray*}
X_l(+,m)X_r(+,-m)=d(ud)^{(-1-m)/2}(ud)^{(1+m)/2}d^{-1}=\mathbb{I},\\
X_l(-,m)X_r(-,-m)=u(du)^{(-1-m)/2}(du)^{(1+m)/2}u^{-1}=\mathbb{I}.
\end{eqnarray*}
In conclusion, for any $S$ and $m$, we have
\begin{equation*}
X_l(S,m)X_r(S,-m)\equiv\mathbb{I}.
\end{equation*}
\end{proof}

The rotation $R_{m,n}$ is actually a generalization of the
simultaneous rotation $R_{n,-n}$, defined in \cite{HackettWan2008}
as acting on actively-interacting braids in their extremal
representations, each of which is a trivial braid diagram with two
identical end-node states. In this case $S_l=S_r=S$, and
$X\equiv\mathbb{I}$, indicating $\sigma_X=\mathds{1}$. Therefore,
for consistency, our general rotation $R_{m,n}$ should reduce to the
simultaneous rotation $R_{m,-m}$ on these braids, if we set $n=-m$.
This is indeed so because
\begin{equation*}
R_{m,-m}(_{T_l}^{S}\hspace{-0.5 mm}[T_a,T_b,T_c]
_{T_r}^{S})=_{T_l+m}^{(-)^mS}\hspace{-0.5
mm}[(P^S_m(T_a,T_b,T_c))\Sigma_{X_l(S,m)\mathbb{I}X_r(S,m)}]
_{T_r-m}^{(-)^mS};
\end{equation*}
however, Lemma \ref{lemmCrossingRelation} gives
$X_l(S,m)\mathbb{I}X_r(S,m)\equiv\mathbb{I}$, such that
\begin{equation*}
R_{m,-m}(_{T_l}^{S}\hspace{-0.5 mm}[T_a,T_b,T_c]
_{T_r}^{S})=_{T_l+m}^{(-)^mS}\hspace{-0.5 mm}[P^S_m(T_a,T_b,T_c)]
_{T_r-m}^{(-)^mS},
\end{equation*}
which is the very simultaneous rotation defined in
\cite{HackettWan2008}.

With these ingredients, we can proceed to find out conserved
quantities of a generic braid under general equivalence moves,
$R_{m,n}$. One can see that unlike trivial braids under simultaneous
rotations in \cite{HackettWan2008}, $T_l+T_r$ and the triple
$(T_a,T_b,T_c)$ are no longer conserved here for a generic rotation,
only a combination of them with sum of crossing numbers, namely the
effective twist
$\Theta=T_l+T_r+\sum\limits_{i=a}^{c}T_i-2\sum\limits_{i=1}^{|X|}x_i$
is conserved under these general equivalence moves. Besides, the
conserved quantities  $S^2$ is generalized to the \textbf{effective
state} $\chi=(-)^{|X|}S_lS_r$ for generic braids. These results are
summarized as the following Lemma.

\begin{lemma}
Under a general rotation $R_{m,n}$, a braid's effective twist
number, $\Theta$, and its effective state, $\chi$, are conserved.
\label{lemmConserve}
\end{lemma}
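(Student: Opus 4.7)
The plan is a direct substitution using Eq.~\ref{eqRotAlgebraic1} together with the bookkeeping lemmas already established (Eq.~\ref{eqRotNcro} and Eq.~\ref{eqRotXvalue}), so the work is to track how each of the eight ingredients of the tuple contributes to $\Theta$ and to $\chi$, and then check that the changes cancel.

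First I would handle $\Theta = T_l+T_r+\sum_{i=a}^{c}T_i-2\sum_{i=1}^{|X|}x_i$. By Eq.~\ref{eqRotAlgebraic1}, the external twists contribute an increment of $m+n$. Every internal twist is shifted by $-m-n$, and although the triple is then permuted by $P^{S_l}_m$, permutation leaves the sum of the three entries invariant, so the internal-twist sum changes by exactly $-3(m+n)$. For the crossing sum, the new sequence is $X_l(S_l,m)\,X\,X_r(S_r,n)$; Eq.~\ref{eqRotXvalue} tells me the prepended block contributes $-m$ and the appended block contributes $-n$ to $\sum x_i$, regardless of $S_l$, $S_r$. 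Collecting these pieces,
\begin{equation*}
\Theta_{\text{new}}-\Theta = (m+n) + \bigl(-3(m+n)\bigr) - 2\bigl(-m-n\bigr) = 0,
\end{equation*}
which gives conservation of $\Theta$.

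Next I would handle $\chi=(-)^{|X|}S_l S_r$. The end-node factors transform as $S_l S_r \mapsto (-)^{m+n} S_l S_r$ by Eq.~\ref{eqRotAlgebraic1}. The new crossing length is $|X_l(S_l,m)| + |X| + |X_r(S_r,n)| = |m|+|X|+|n|$ by Eq.~\ref{eqRotNcro}. The one small observation needed is the parity identity $(-1)^{|k|}=(-1)^{k}$ for any $k\in\mathbb{Z}$, after which
\begin{equation*}
\chi_{\text{new}} = (-)^{|m|+|X|+|n|}\,(-)^{m+n}\,S_l S_r = (-)^{2(m+n)}(-)^{|X|}S_l S_r = \chi.
\end{equation*}

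The two verifications are independent and equally mechanical; there is no genuine obstacle, only the need to keep track of signs and the role of $P^{S_l}_m$. The only point where one might slip is forgetting that $P^{S_l}_m$ merely reorders the triple $(T_a,T_b,T_c)$ and hence does not disturb the sum that appears in $\Theta$. For completeness I would remark that the alternative formulation Eq.~\ref{eqRotAlgebraic2} gives the same conclusion by the same argument, with $P^{S_r}_{-n}$ replacing $P^{S_l}_m$; this is consistent with Lemma~\ref{lemmPermRel}, since both permutations act on a triple whose sum is the relevant conserved combination.
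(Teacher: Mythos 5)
Your proposal is correct and follows essentially the same route as the paper: direct substitution into Eq.~\ref{eqRotAlgebraic1}, then cancellation of the increments using Eq.~\ref{eqRotNcro} and Eq.~\ref{eqRotXvalue}. Your explicit remarks that $P^{S_l}_m$ preserves the twist sum and that $(-1)^{|k|}=(-1)^{k}$ are steps the paper uses implicitly, so nothing is missing on either side.
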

\begin{proof}
By Eq. \ref{eqRotAlgebraic1}, a general rotation $R_{m,n}$ can
transform a generic braid
\begin{equation*}
_{T_l}^{S_l}\hspace{-0.5 mm}[(T_a,T_b,T_c)\sigma_X] _{T_r}^{S_r},
\end{equation*}
with
$\Theta=T_l+T_r+\sum\limits_{i=a}^{c}T_i-2\sum\limits_{i=1}^{|X|}x_i$
and $\chi=(-)^{|X|}S_lS_r$, into
\begin{equation*}
_{\ \ T_l+ m}^{(-)^mS_l}\hspace{-0.5 mm}[
(P^{S_l}_m(T_a-m-n,T_b-m-n,T_c-m-n))\sigma_{X_l(S_l,m)XX_r(S_r,n)}]
_{T_r+ n}^{(-)^nS_r},
\end{equation*}
with
\begin{equation*}
\Theta'=(T_l+m)+(T_r+n)+(\sum\limits_{i=a}^{c}T_i-3(m+n))-2(\sum\limits_{i=1}^{|m|}
y_i+\sum\limits_{i=1}^{|X|}x_i+\sum\limits_{i=1}^{|n|} z_i),
\end{equation*}
where $y_i\in X_l(S_l,m)$ and $z_i\in X_r(S_r,n)$ and
\begin{equation*}
\chi'=(-)^{|X_l(S_l,m)|+|X|+|X_r(S_r,n)|}(-)^mS_l(-)^nS_r.
\end{equation*}
Nonetheless, by Eq. \ref{eqRotNcro} and Eq. \ref{eqRotXvalue}, we
obtain
\begin{equation*}
\Theta'=T_l+T_r+\sum\limits_{i=a}^{c}T_i+\sum\limits_{i=1}^{|X|}x_i+m+n-3(m+n)-2(-m-n)=\Theta,
\end{equation*}
and
\begin{equation*}
\chi'=(-)^{|X|+|m|+|n|}(-)^mS_l(-)^nS_r=(-)^{|X|}S_lS_r=\chi.
\end{equation*}
This establishes the proof.
\end{proof}

As a direct consequence of Lemma \ref{lemmConserve}, we have the
following Theorem, which provides a character for
actively-interacting braids.
\begin{theorem}
The effective state of any actively-interacting braids is $\chi=1$,
and any braid with effective state $\chi=-1$ must be passive.
\label{theoChi}
\end{theorem}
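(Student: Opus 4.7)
The plan is to leverage the fact that actively-interacting braids admit a very special representative, together with the invariance of $\chi$ established in Lemma \ref{lemmConserve}. Recall from the preceding discussion that an active braid is by definition both completely left- and right-reducible, so it has an extremal representative which is a trivial braid diagram with identical end-node states $S_l=S_r=S$ and empty crossing sequence $X\equiv\mathbb{I}$. This is precisely the representation used in \cite{HackettWan2008}.

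First I would compute $\chi$ directly on this trivial extremum. Since $|X|=0$ and $S_l=S_r=S$, one has
\begin{equation*}
\chi = (-)^{|X|}S_l S_r = (-)^0\, S\cdot S = S^2 = 1,
\end{equation*}
using the fact that each end-node state is a sign $\pm 1$ and therefore squares to $1$.

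Next I would promote this computation from one specific representative to the entire equivalence class. By Lemma \ref{lemmConserve}, $\chi$ is invariant under every general rotation $R_{m,n}$. Since any braid diagram equivalent to the trivial extremum of an active braid is obtained by some composition of such rotations (together with the braid-group relations, which do not alter $|X|$ modulo $2$ nor the end-node signs), every diagram in the equivalence class of an active braid has the same value $\chi=1$. Hence every actively-interacting braid carries $\chi=1$, proving the first assertion.

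Finally, the second assertion follows by contraposition: if a braid $B$ has $\chi=-1$, then $B$ cannot belong to the class of active braids (whose effective state is $+1$), so $B$ must be passive. I do not anticipate a real obstacle here, since the argument is almost entirely bookkeeping once Lemma \ref{lemmConserve} is in hand; the only subtle point is to make sure that the "trivial extremum with $S_l=S_r$ and $X=\mathbb{I}$" characterization of active braids is applied at the level of representatives, and then invariance of $\chi$ does the rest.
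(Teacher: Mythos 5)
Your proposal is correct and follows essentially the same route as the paper's own proof: evaluate $\chi=S^2=1$ on the trivial representative of an active braid, invoke Lemma \ref{lemmConserve} to extend this to the whole equivalence class, and obtain the second assertion by contraposition. The extra bookkeeping you supply about how equivalent diagrams are related by rotations is a slightly more careful justification of a step the paper treats as immediate, but it is not a different argument.
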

\begin{proof}
The proof of this theorem is very simple. Any actively-interacting
braid has a trivial representation, whose effective state is
$\chi\equiv S^2=1$. Hence, according to Lemma \ref{lemmConserve},
the effective state of any actively-interacting braid must be
$\chi=1$. This implies, on the other hand, any stable braid with
$\chi=-1$ is never actively-interacting, and is thus passive.
\end{proof}

All results we have obtained so far are valid for braids in any
representation. Now we would like to consider the extremal
representation of a braid, and try to find out how equivalence moves
act on braids in this representation in particular, and the
corresponding conserved quantities. The case of actively-interacting
braids are investigated in \cite{HackettWan2008}, and it turns out
that there are infinite number of extrema which are trivial braids
related to each other by simultaneous rotations. Moreover,
$T_l+T_r$, the triple $(T_a,T_b,T_c)$ up to permutation, and $S^2$
are conserved under these simultaneous rotations. For passive
braids, the situation is more involved for that their extrema are
not trivial braids and that generic rotations (including generic
simultaneous rotations) increase the number of crossings of an
extremum. However, there are also infinite number of extrema of a
passive braid due to the following Lemma.

\begin{lemma}
Simultaneous rotations of the form $R_{3k,-3k}$ with
$k\in\mathbb{Z}$ takes an extremum of a braid to another extremum of the braid.%
\label{lemm3KsimRot}
\end{lemma}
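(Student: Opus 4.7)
The plan is to reduce the lemma to a statement about conjugation in the braid group $B_3$ by the element $\Delta := udu = dud$ (the equality of the two expressions is the braid relation in Eq.~1). This element has the property that its square generates the center of $B_3$, and conjugation by $\Delta$ implements the generator swap $u \leftrightarrow d$.

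First, I would verify directly from the piecewise formulas of Eq.~\ref{eqCrossing} that, as words in $B_3$,
$X_l(S,3) = \Delta^{-1}$ and $X_r(S,-3) = \Delta$ for both $S \in \{+,-\}$. For example, $X_l(+,3) = d(ud)^{-2} = u^{-1}d^{-1}u^{-1} = \Delta^{-1}$ after cancellation of $dd^{-1}$, and the three remaining cases are analogous. Using the classical identity $\Delta^2 = (ud)^3 = (du)^3$ together with Eq.~\ref{eqCrossing}, this extends to $X_l(S,3k) = \Delta^{-k}$ and $X_r(S,-3k) = \Delta^{k}$ for every $k \in \mathbb{Z}$ and either sign $S$.

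Second, by Eq.~\ref{eqRotAlgebraic1}, applying $R_{3k,-3k}$ to a braid with crossing sequence $X$ produces the crossing sequence $X_l(S_l,3k)\,X\,X_r(S_r,-3k) = \Delta^{-k} X \Delta^{k}$. Two standard observations about $B_3$ then close the argument: (i) $\Delta^2$ is central in $B_3$, so $\Delta^{-k}X\Delta^k = X$ whenever $k$ is even; and (ii) conjugation by $\Delta$ is the automorphism of $B_3$ that swaps $u \leftrightarrow d$ (and $u^{-1} \leftrightarrow d^{-1}$), hence preserves word length in the alphabet $\{u,u^{-1},d,d^{-1}\}$. Combining (i) and (ii), for every $k$ the element $\Delta^{-k} X \Delta^{k}$ admits a representative of word length exactly $|X|$.

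Since $R_{3k,-3k}$ is an equivalence move, its output lies in the same equivalence class as its input; the original braid being an extremum means $|X|$ is the minimum crossing number in this class. The preceding step shows that the output also admits a representative with $|X|$ crossings, so the output is itself an extremum, as claimed. The main obstacle is the first step, namely verifying $X_l(S,3k)=\Delta^{-k}$ and $X_r(S,-3k)=\Delta^{k}$ cleanly from the case-split formulas of Eq.~\ref{eqCrossing}; this requires a small case analysis on the parities of $k$ and on the sign $S$, combined with $\Delta^2=(ud)^3$. Once that identification is in hand, everything else is immediate from well-known properties of $B_3$.
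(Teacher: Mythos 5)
Your proof is correct, and it reaches the paper's key intermediate fact --- that the rotation turns the crossing sequence $X$ into the word $\bar{X}$ obtained by swapping $u\leftrightarrow d$ letterwise --- by a genuinely different route. The paper establishes this by induction on the number of crossings: it verifies the one-crossing case $X_l(S_l,\pm3)x_1X_r(S_r,\mp3)=\bar{x}_1$ directly, then for the inductive step inserts a trivial pair $X_r(S,\mp3)X_l(S,\pm3)=\mathbb{I}$ (Lemma \ref{lemmCrossingRelation}) between the $N$-th and $(N+1)$-th crossings so that the hypothesis applies to each factor separately; general $k$ is then handled by writing $R_{3k,-3k}=R_{\pm3,\mp3}^{|k|}$. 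You instead identify $X_l(S,3k)=\Delta^{-k}$ and $X_r(S,-3k)=\Delta^{k}$ with powers of the half-twist $\Delta=udu=dud$ (your sample computation $X_l(+,3)=d(ud)^{-2}=u^{-1}d^{-1}u^{-1}=\Delta^{-1}$ checks out, as do the other cases and negative $k$), so the new crossing sequence is the conjugate $\Delta^{-k}X\Delta^{k}$; centrality of $\Delta^{2}$ plus the standard fact that conjugation by $\Delta$ swaps the two generators of $B_3$ then gives the length-preserving normal form $\bar{f}^{k}(X)$ in one stroke, for all $k$ simultaneously. Your version buys conceptual clarity and eliminates both the induction and the reduction to $k=\pm1$, at the price of importing two external facts about $B_3$; the paper's induction is more elementary and self-contained. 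Both arguments close identically: rotations are equivalence moves, the image admits a representative with $|X|$ crossings, and minimality of $|X|$ among diagrams in the class forces the image to be an extremum. One small point worth making explicit in your write-up is that the cancellations reducing $\Delta^{-k}X\Delta^{k}$ to $\bar{f}^{k}(X)$ are themselves equivalence moves on braid diagrams, so the rotated diagram is genuinely equivalent to one with exactly $|X|$ crossings; this is implicit in the paper's standing convention that crossing sequences are always reduced.
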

\begin{proof}
By the definition of extremal, all extrema of the same braid have
the same number of crossings. Thus we only need to prove that
$R_{3k,-3k},\ k\in\mathbb{Z}$ on an extremum preserves its number of
crossings, then the resultant representation of braid must also be
an extremum; otherwise the braid diagram undergoing the rotation
should not be an extremum in the first place. Additionally, since
$R_{3k,-3k}=R_{\pm3,\mp3}^{|k|}$ by \cite{HackettWan2008}, where the
$\pm$ and $\mp$ depend on the sign of $k$. It is sufficient to prove
the case of $k=\pm1$, which are just simultaneous $\pi$ rotations.
We now prove that simultaneous $\pi$-rotations of a braid take the
braid's crossing sequence, $X=x_1\cdots x_i\cdots x_N,\
N\in\mathbb{N}$, to $\bar{X}=\bar{x}_1\cdots \bar{x}_i\cdots
\bar{x}_N,\
N\in\mathbb{N}$, where%
\begin{equation*}
\bar{x}_i=\left\{%
\begin{array}
[c]{l}%
d,\ x_i=u\\
u,\ x_i=d
\end{array}
\right..
\end{equation*}
and thus keep the number of crossings of the braid invariant. For
one-crossing braids with arbitrary $S_l$ and $S_r$, it is
straightforward to see that
$X_l(S_l,\pm3)x_1X_r(S_r,\mp3)=\bar{x}_1$, for
$x_1=u,d,u^{-1},d^{-1}$. We assume that the this is true for any
braid with up to $N\in\mathbb{N}$ crossings, with arbitrary end-node
states, namely
\begin{equation*}
X_l(S_l,\pm3)x_1x_2...x_NX_r(S_r,\mp3)=\bar{x}_1\bar{x}_2...\bar{x}_N.
\end{equation*}
Hence, for any braid with $N+1$ crossings and end-node states, say
$S'_l$ and $S'_r$,
\begin{align*}
&X_l(S'_l,\pm3)x_1x_2...x_Nx_{N+1}X_r(S'_r,\mp3)\\
\xRightarrow{X_r(S,\mp3)X_l(S,\pm3)=\mathbb{I}}
&=(X_l(S'_l,\pm3)x_1x_2...x_NX_r(S,\mp3))(X_l(S,\pm3)x_{N+1}X_r(S'_r,\mp3)),
\end{align*}
where we inserted a pair of crossing sequences, $X_r(S,\pm3)$ and
$X_l(S,\pm3)$ with arbitrary $S$, whose product is trivial by Lemma
\ref{lemmCrossingRelation}, between the $N$-th and $(N+1)$-th
crossing. Note that these two crossing sequences are not created by
real rotations $R_{0,\pm3}$ or $R_{\pm3,0}$, but rather only equal
to respectively the crossing sequences created by these two
rotations. Hence, according to the assumption that the claim is
valid for an $N$-crossing braid with arbitrary end-node states, and
the fact of the validity for all one-crossing braids, we arrive at
\begin{equation*}
X_l(S_l,\pm1)x_1x_2...x_Nx_{N+1}X_r(S_r,\mp1)=(\bar{x}_1\bar{x}_2...\bar{x}_N)(\bar{x}_{N+1})=\bar{x}_1\bar{x}_2...\bar{x}_{N+1}.
\end{equation*}

Bearing in mind that $|X|=|\bar{X}|$, therefore by induction,
simultaneous rotations, $R_{\pm 3,\mp 3}$ take a generic braid with
crossing sequence $X$ to an equivalent braid with sequence
$\bar{X}$, which does not change the number of crossings. This
certainly indicates that $R_{3k,-3k}$ with $k\in\mathbb{Z}$ rotates
an extremum to another extremum of the same braid, which validates
the proof. Furthermore, by Eq. \ref{eqRotAlgebraic1}, Eq.
\ref{eqRotAlgebraic2}, and Eq. \ref{eqPermrelation}, and with a
convenient redefinition: $\bar{X}=\bar{f}(X)$, we can pin down the
algebraic form of the action of a $R_{3k,-3k},\ k\in\mathbb{Z}$ on
generic braids,
\begin{eqnarray}
&&R_{3k,-3k}(_{T_l}^{S_l}\hspace{-0.5 mm}[ (T_a,T_b,T_c)\sigma_X] _{T_r}^{S_r})\nonumber\\
&&=_{\ \ T_l+ 3k}^{(-)^{3k}S_l}\hspace{-1.5 mm}[
((1,3)^k(T_a,T_b,T_c))\sigma_{\bar{f}^k(X)}]
_{T_r- 3k}^{(-)^{3k}S_r},%
\label{eqSimpirotAlgebraic1}
\end{eqnarray}
or equivalently,
\begin{eqnarray}
&&R_{3k,-3k}(_{T_l}^{S_l}\hspace{-0.5 mm}[\sigma^{-1}_X (T_{a'},T_{b'},T_{c'})] _{T_r}^{S_r})\nonumber\\
&&=_{\ \ T_l+ 3k}^{(-)^{3k}S_l}\hspace{-1.5 mm}[
\sigma^{-1}_{\bar{f}^k(X)}((1,3)^k(T_{a'},T_{b'},T_{c'}))] _{T_r-
3k}^{(-)^{3k}S_r},%
\label{eqSimpirotAlgebraic2}
\end{eqnarray}
where $(1,3)^k$ is the permutation induced by $R_{3k,-3k}$, and
$\bar{f}^k(X)=X$ for $k$ even, while $\bar{f}^k(X)=\bar{f}(X)$ for
$k$ odd.
\end{proof}

Similar to the case of actively-interacting braids, there are
conserved quantities under rotations in the form of $R_{3k,-3k}$,
which are shown in the Lemma below.

\begin{lemma}
$T_l+T_r$, $\sum\limits_{i=a}^{c}T_i$, and
$\sum\limits_{i=1}^{|X|}x_i$ of an extremum of a braid are invariant
under rotations of the form $R_{3k,-3k},\ k\in\mathbb{Z}$. The
triple $(T_a,T_b,T_c)$ is invariant when $k$ is even.
\label{lemmConsSimRot3k}
\end{lemma}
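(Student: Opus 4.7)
The plan is to read off each claimed invariance directly from the explicit algebraic form of $R_{3k,-3k}$ already established in the proof of Lemma~\ref{lemm3KsimRot}, namely equation~\ref{eqSimpirotAlgebraic1}. The key simplification is that for this particular family of rotations one has $m+n=3k+(-3k)=0$, so the common shift $-m-n$ that would otherwise be added to every internal twist under a generic $R_{m,n}$ vanishes. Only three effects therefore remain on the braid data: the external twists pick up $T_l\mapsto T_l+3k$ and $T_r\mapsto T_r-3k$; the triple is sent to $(1,3)^k(T_a,T_b,T_c)$; and the crossing sequence becomes $\bar{f}^k(X)$. Each of the three claimed invariants can now be verified independently.

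First I would check $T_l+T_r$: the two shifts cancel, so $T_l+T_r$ is manifestly invariant. Next, for $\sum_{i=a}^{c}T_i$, observe that $(1,3)^k$ is either $\mathds{1}$ (when $k$ is even) or the transposition swapping the first and third entries (when $k$ is odd); either way it permutes the triple without changing the sum of its entries. The same observation immediately gives the final sentence of the lemma: when $k$ is even $(1,3)^k=\mathds{1}$, so $(T_a,T_b,T_c)$ itself, not merely its sum, is preserved.

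Third, for $\sum_{i=1}^{|X|}x_i$, split on the parity of $k$. When $k$ is even, $\bar{f}^k(X)=X$ and the claim is immediate. When $k$ is odd, $\bar{f}^k(X)=\bar{f}(X)$, and by the definition of $\bar{f}$ in Lemma~\ref{lemm3KsimRot} this swaps $u\leftrightarrow d$ (and, by the natural extension compatible with the group structure, $u^{-1}\leftrightarrow d^{-1}$). Since the integer value assigned to a crossing depends only on whether it is a generator or its inverse ($u=d=1$, $u^{-1}=d^{-1}=-1$) and not on which of the two generators it is, the swap preserves every individual $x_i$ as an integer. Hence $\sum_{i=1}^{|X|}x_i$ is unchanged, and because $|\bar{f}^k(X)|=|X|$ by Lemma~\ref{lemm3KsimRot} the summation range is the same as well.

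There is essentially no obstacle here; the substantive work was carried out in proving Lemma~\ref{lemm3KsimRot}, and the present lemma is a corollary of equation~\ref{eqSimpirotAlgebraic1} together with the description of $\bar{f}$. The only point requiring care is to keep the two-fold meaning of $x_i$ straight: under $\bar{f}$ the abstract crossing $u$ becomes the abstract crossing $d$, but their integer values coincide, which is precisely what makes the crossing-number sum invariant.
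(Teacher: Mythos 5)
Your proposal is correct and follows essentially the same route as the paper: both read the invariances directly off the explicit form of $R_{3k,-3k}$ in Eq.~\ref{eqSimpirotAlgebraic1}, noting that the external-twist shifts cancel, that $(1,3)^k$ permutes the triple without changing its sum (and is the identity for $k$ even), and that $\bar{f}$ swaps $u\leftrightarrow d$ without altering crossing numbers. Your extra remark that the swap preserves each $x_i$ as an integer because $u=d=1$ and $u^{-1}=d^{-1}=-1$ is a slightly more careful justification of the final step that the paper states without comment.
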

\begin{proof}
From Eq. \ref{eqSimpirotAlgebraic1} and Eq.
\ref{eqSimpirotAlgebraic2}, it is obvious that a rotation of the
form $R_{3k,-3k},\ k\in\mathbb{Z}$ takes $T_l+T_r$ to
$T_l+3k+T_r-3k=T_l+T_r$, turns the triple $(T_a,T_b,T_c)$ into
$(1,3)^k(T_a,T_b,T_c)$ which is again $(T_a,T_b,T_c)$ if $k$ is even
but does not affect $\sum\limits_{i=a}^{c}T_i$ in any case, and
changes $\sum\limits_{i=1}^{|X|}X_i$ to
$\sum\limits_{i=1}^{|\bar{X}|}\bar{x}_i=\sum\limits_{i=1}^{|X|}x_i$.
\end{proof}

These conserved quantities imply that the previously defined
$\Theta$ is also a conserved quantity under this class of
equivalence moves, which is expected because in Lemma
\ref{lemmConserve} we have shown that $\Theta$ is conserved under
any rotations. That is, it is the same for any representation of a
braid.

Now that we have found conserved quantities under rotations of the
form $R_{3k,-3k},\ k\in\mathbb{Z}$. The only issue left behind is
that we have not proven that the simultaneous rotations of multiple
of $\pi$ are the only possible class of rotations under which the
set of all extrema of a braid is closed. If this is true, then each
conserved quantity in Lemma \ref{lemmConsSimRot3k} is identical for
all extrema of a braid. There are strong evidences that this is
indeed the case; however, we are lack of a rigorous proof.
Therefore, we only state this observation as a conjecture.
\begin{conjecture}
Any rotation that transforms an extremum to another extremum of the
same braid must take the form of $R_{3k,-3k}$ with $k\in\mathbb{Z}$.
Assuming this, all extrema of a braid share the same $T_l+T_r$,
$\sum\limits_{i=a}^{c}T_i$, and $\sum\limits_{i=1}^{|X|}x_i$.
\label{conj3KsimRot}
\end{conjecture}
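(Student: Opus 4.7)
The conjecture splits into two claims: (i) any rotation that carries an extremum to another extremum of the same braid must be of the form $R_{3k,-3k}$ with $k\in\mathbb{Z}$, and (ii) granted (i), the quantities $T_l+T_r$, $\sum_{i=a}^{c}T_i$, and $\sum_{i=1}^{|X|}x_i$ take the same value on every extremum. Assuming (i), the second claim is nearly immediate: any two extrema of the same braid are joined by a chain of equivalence moves, and if the only such moves that stay on the extremum orbit are the $R_{3k,-3k}$ (together with $B_3$ relations, which trivially preserve all three quantities), then Lemma~\ref{lemmConsSimRot3k} transfers the invariances along the chain. The real work, therefore, is in (i), and this is where I focus the plan.

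For (i), my plan is to analyze the word $W=X_l(S_l,m)\,X\,X_r(S_r,n)$ produced by an arbitrary $R_{m,n}$ and to track the only two $B_3$-relations that can simplify it: the length-reducing $xx^{-1}=1$ and the length-preserving $udu=dud$. Since $W$ has formal length $|m|+|X|+|n|$ and the final extremum has length $|X|$, we need exactly $(|m|+|n|)/2$ pair cancellations, forcing $|m|+|n|$ to be even. By Eq.~\ref{eqCrossing}, neither $X_l(S_l,m)$ nor $X_r(S_r,n)$ admits any internal cancellation, so every cancelling pair must bridge between $X_l$ and $X$, between $X$ and $X_r$, or — only when $X$ can be trivialized by intermediate moves — between $X_l$ and $X_r$ directly. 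I would then argue that the rigid alternating shape of $X_l$ and $X_r$ forces the reductions to proceed in units of three consecutive letters, matching the pattern already exploited in the inductive proof of Lemma~\ref{lemm3KsimRot}; any residue short of a full triple leaves an irreducible stub that strictly increases the final crossing count, contradicting extremality.

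To assemble these local observations into a global statement I would induct on $|m|+|n|$ and use the composition law $R_{m,n}\,R_{m',n'}\sim R_{m+m',n+n'}$ (the permutation corrections of Lemma~\ref{lemmPermRel} not affecting crossing counts). The set of $(m,n)\in\mathbb{Z}^2$ whose rotation preserves extremality is then a subgroup of $\mathbb{Z}^2$; Lemma~\ref{lemm3KsimRot} exhibits $3\mathbb{Z}\cdot(1,-1)$ inside it, and the combinatorial step above aims to force equality.

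I expect the hardest step, and the reason this remains only a conjecture, to be the anti-diagonal condition $m+n=0$. The exponent sum $\sum x_i$ is a genuine $B_3$-invariant and transforms as $\sum x_i\mapsto \sum x_i-m-n$ under $R_{m,n}$, so $m+n=0$ on extremum-preserving rotations is logically equivalent to the statement that $\sum x_i$ is constant on extrema — which is itself part of claim (ii). Any honest proof must therefore break this apparent circularity, either by producing a canonical normal form for extremal braid words (not presently available in this framework) or by exhibiting a separate topological invariant of the framed braid diagram that distinguishes exponent sums. Identifying such an invariant — perhaps a framing-inclusive writhe coming from the embedding into the three-manifold — is the step I would expect to be the main obstacle.
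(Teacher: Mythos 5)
You should first be aware that the paper does not prove this statement: the authors explicitly write that they ``are lack of a rigorous proof'' and for that reason record it only as a conjecture, so there is no paper proof to compare your proposal against. Your reduction of claim (ii) to claim (i) via Lemma~\ref{lemmConsSimRot3k} is exactly the (one-line) observation the paper itself makes just before stating the conjecture, and that part is fine. The question is whether your plan for claim (i) closes the gap the authors could not, and it does not.

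Two steps in your plan are genuinely incomplete. First, the combinatorial core --- that the rigid alternating shape of $X_l(S_l,m)$ and $X_r(S_r,n)$ forces reductions of $W=X_l(S_l,m)\,X\,X_r(S_r,n)$ to ``proceed in units of three consecutive letters'' --- is asserted, not argued. Your bookkeeping of cancelling pairs is only valid for the first free cancellation: once the length-preserving relation $udu=dud$ is applied, the block structure of $W$ is destroyed and the ``bridging'' classification of subsequent cancellations no longer makes sense. A correct version would need a normal-form or geodesic-length theory for the words that can arise, and there is a further subtlety you elide: the extremal crossing number is the minimum over diagrams related by the paper's equivalence moves (rotations, which also flip end-node states and are restricted to appending the specific words of Eq.~\ref{eqCrossing}), not the minimum over all $B_3$-words representing the same group element, so the two notions of ``reduced length'' cannot be identified without proof. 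Second, you candidly identify the anti-diagonal condition $m+n=0$ as circular with part of claim (ii) --- via Eq.~\ref{eqRotXvalue} and the invariance of the exponent sum $\sum x_i$ under the $B_3$ relations --- and you offer no way to break the circularity beyond hoping for an unspecified framing invariant. Since that condition is half of the conclusion $R_{m,n}=R_{3k,-3k}$ (the other half being $m\equiv 0 \pmod 3$), the proposal as written establishes neither half of claim (i); it is an honest and reasonably structured research plan (the subgroup-of-$\mathbb{Z}^2$ framing and the appeal to Lemma~\ref{lemm3KsimRot} for the inclusion $3\mathbb{Z}\cdot(1,-1)$ are sensible), but it is not a proof, and the statement remains a conjecture.
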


\section{Algebra of interactions: symmetries and relations}
\label{secInt}%
In \cite{HackettWan2008} it is shown that the interaction of any two
active braids produces another active braid. Now that we are dealing
with not only active braids but also the passive ones, one may ask
what the outcome of the interaction of an active braid and a passive
braid, say a propagating braid, should be. To answer this question,
we need sufficient preparation, divided into the following
subsections.
\subsection{Conserved quantities under interactions}
We first repeat in words the interaction condition formulated in
\cite{LeeWan2007, HackettWan2008}. This condition demands that one
of the two braids, say $B_1$ and $B_2$, under an interaction must be
active and that the two adjacent nodes, one of $B_1$ and the other
of $B_2$, are either already in or can be rotated to the
configuration where they have the same state and share a twist-free
edge. The latter requirement is actually the condition of a
$2\rightarrow 3$ Pachner move\cite{LeeWan2007}. The algebraic form
of this condition is explicitly given in \cite{HackettWan2008} and
is thus not duplicated here but will rather be adopted directly.

Since a braid is in fact an equivalence class of braid diagrams, a
convenient choice of the representative of the class is important.
Whether a braid is propagating or not is most transparent when the
braid is represented by its unique representative which has no
external twist. On the other hand, an active braid can always be put
in a trivial representation, which simplifies the calculation of
interactions. Therefore, in this section any active braid is
represented by one of its extrema, i.e. a trivial diagram with
external twists, and any braid which does not actively interact is
represented by its unique representative.

Now when an active braid, $B$, meets a passive braid, $B'$, say from
the left of $B'$ (the case where $B$ is on the right of $B'$ follows
similarly), with the interaction condition fulfilled, what does the
resulted braid $B+B'$ look like? Here, as in \cite{HackettWan2008}
we use a $+$ for the operation of interaction. A special case is
that $B$ in its trivial form has no right external twist, and since
$B'$ is in the representation without external twist, one can
directly apply a $2\rightarrow 3$ move of the $B$'s right end-node
and the left end-node of $B'$, then paly with the techniques
introduced in \cite{LeeWan2007} to complete the interaction. Let us
address this simple case first.
\begin{lemma}
Given an active braid
$B=\left._{T_l}^{\hspace{0.5mm}S}\hspace{-0.5mm}[T_a,T_b,T_c]^{S}_{T_r}\right.$,
with $T_r=0$, and a passive braid
$B'=\left._{\hspace{1mm}0}^{S_l}\hspace{-0.5mm}[(T'_a,T'_b,T'_c)\sigma_X]^{S_r}_{0}\right.$,
with $S_l=S$, the interaction of $B$ and $B'$ with $B$ on the left
of $B'$ produces
$B''=\left._{\hspace{8.5mm}0}^{(-)^{T_l}S_l}\hspace{-0.5mm}[(P^S_{-T_l}(T_a+T'_a+T_l,T_b+T'_b+T_l,T_c+T'_c+T_l))\sigma_{X_l(S,-T_l)X}]^{S_r}_{0}\right.$.
\label{lemmIntNotwist}
\end{lemma}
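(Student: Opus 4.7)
The plan is to reduce the problem to a clean interaction in which both braids at the interface have zero external twists and matching end-node states, and then to apply the $2\to 3$ Pachner move and read off the combined braid directly.

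First, I would apply the rotation $R_{-T_l,0}$ to $B$ via Eq.~\ref{eqRotAlgebraic1}. Because $B$ is in the trivial extremal form of an active braid (so $X=\mathbb{I}$) and $X_r(S,0)=\mathbb{I}$, this produces an equivalent representative of $B$ with zero left external twist, new left end-node state $(-)^{T_l}S$, left-side internal triple $P^S_{-T_l}(T_a+T_l,T_b+T_l,T_c+T_l)$, and newly generated interior crossing sequence $X_l(S,-T_l)$. The right end-node of $B$ is untouched, so its state remains $S=S_l$ and its external twist remains $0$.

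With this preparation, the rotated $B$ and $B'$ both have zero external twists, the shared edge between them carries zero twist, and the two adjacent end-nodes already share the common state $S$. Thus the interaction condition of \cite{LeeWan2007,HackettWan2008} holds on the nose, and the $2\to 3$ Pachner move may be applied directly to the interface. This produces a single combined braid: the crossing sequence concatenates to $X_l(S,-T_l)X$, the new left end-node is that of the rotated $B$ (state $(-)^{T_l}S_l$, external twist $0$), the new right end-node is that of $B'$ (state $S_r$, external twist $0$), and the twists on the strands joining the two former end-nodes add componentwise. To identify this with the claimed expression, the additive middle twist appearing between $X_l(S,-T_l)$ and $X$ must be transported to the left end of the combined crossing sequence: the piece coming from rotated $B$ is already in place, while the piece $(T'_a,T'_b,T'_c)$ inherited from $B'$ is conjugated by $\sigma_{X_l(S,-T_l)}^{-1}$ as it is pushed across $X_l(S,-T_l)$. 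Combining this with the linearity of $P^S_{-T_l}$ on tuples then yields the claimed left-side triple $P^S_{-T_l}(T_a+T'_a+T_l,T_b+T'_b+T_l,T_c+T'_c+T_l)$.

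The main obstacle will be establishing the identity between the right-action of $\sigma_{X_l(S,-T_l)}^{-1}$ and the left-action of $P^S_{-T_l}$ on an arbitrary triple, uniformly in $S$ and $T_l$. This identity is geometrically natural because both operations implement the same $-T_l\cdot\pi/3$ rotation of the left end-node on the three internal strands, but an explicit algebraic verification requires a parity split of $T_l$, since $X_l(S,-T_l)$ from Eq.~\ref{eqCrossing} and $P^S_{-T_l}$ from Eq.~\ref{eqPermrelation} take visibly different forms in the even and odd cases, and within each case one must match the induced cycle of $\sigma_{X_l(S,-T_l)}$ against the explicit cycles listed in Eq.~\ref{eqPermrelation}. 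Once this key algebraic check is carried out, the remaining data (end-node states, external twists, and crossing sequence) agree with the statement by direct construction.
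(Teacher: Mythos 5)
Your proposal is correct, but it reverses the order of operations relative to the paper and thereby incurs an extra algebraic obligation. The paper's proof glues first and rotates second: since $T_r=0$ and $S=S_l$ the interaction condition already holds, so it forms the connected sum $\left._{T_l}^{S_l}\hspace{-0.5mm}[(T_a+T'_a,T_b+T'_b,T_c+T'_c)\sigma_X]^{S_r}_{0}\right.$ with the left external twist $T_l$ still in place, and only then applies $R_{-T_l,0}$ to the \emph{combined} braid; Eq.~\ref{eqRotAlgebraic1} then delivers the permutation $P^{S}_{-T_l}$ acting on the already-summed triple and the prefix $X_l(S,-T_l)$ in one stroke. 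You instead rotate $B$ into zero-external-twist form first and glue second, which makes the interaction condition manifest at the interface but forces you to push $(T'_a,T'_b,T'_c)$ leftward across the newly created $X_l(S,-T_l)$ and to prove that this transport, $\sigma^{-1}_{X_l(S,-T_l)}$, coincides with $P^{S}_{-T_l}$ on triples. You have identified that identity as the crux, and it is true, but the parity-split case analysis you anticipate is unnecessary: the identity is already forced by the internal consistency of the rotation formulas. Applying $R_{m,0}$ must leave the right-hand triple unchanged apart from the uniform shift $-m$, so comparing Eq.~\ref{eqRotAlgebraic1} with Eq.~\ref{eqRotAlgebraic2} gives $(P^{S}_{m}(V))\sigma_{X_l(S,m)}=V$ for every triple $V$, i.e.\ $\sigma^{-1}_{X_l(S,m)}=P^{S}_{m}$ as operations on triples; setting $m=-T_l$ closes your argument. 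With that observation your route is complete and yields Eq.~\ref{eqIntNoTwist} exactly; the paper's ordering is simply the more economical one because it never separates the rotation-induced permutation from the rotation-induced crossing sequence.
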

\begin{proof}
As $T_r=0$ and $S=S_l$, the interaction condition is met and thus no
rotation is needed; hence, according to \cite{LeeWan2007}, $B+B'$
forms a connected sum of $B$ and $B'$, which is, in our algebraic
language,
\begin{equation}
\begin{aligned}
B+B' &=
\left._{T_l}^{S_l}\hspace{-0.5mm}[T_a,T_b,T_c]^{S_l}_{0}\right. \#
\left._{\hspace{1mm}0}^{S_l}\hspace{-0.5mm}[(T'_a,T'_b,T'_c)\sigma_X]^{S_r}_{0}\right.\\
&=\left._{T_l}^{S_l}\hspace{-0.5mm}[(T_a+T'_a,T_b+T'_b,T_c+T'_c)\sigma_X]^{S_r}_{0}\right.\\
&\cong R_{-T_l,0}\left(
\left._{T_l}^{S_l}\hspace{-0.5mm}[(T_a+T'_a,T_b+T'_b,T_c+T'_c)\sigma_X]^{S_r}_{0}\right.
\right)\\
&=\left._{\hspace{8.5mm}0}^{(-)^{T_l}S_l}\hspace{-0.5mm}[(P^S_{-T_l}(T_a+T'_a+T_l,T_b+T'_b+T_l,T_c+T'_c+T_l))\sigma_{X_l(S,-T_l)X}]^{S_r}_{0}\right.,
\end{aligned}
\label{eqIntNoTwist}
\end{equation}
where a rotation $R_{-T_l,0}$ is applied after the connected sum to
put the resulted braid in its representative with zero external
twist, which induces a permutation $P^S_{-T_l}$ on the left triple
of internal twists, and a crossing sequence $X_l(S,-T_l)$, appended
to the original $X$ from left.
\end{proof}

However, in general the trivial diagram representing an active braid
may have external twists on both external edges. If the interaction
condition is satisfied when the trivial braid in this case meets a
passive braid, a rotation is usually required in order to perform
the connected sum algebraically for them to interact. We now deal
with this.
\begin{lemma}
Given an active braid
$B=\left._{T_l}^{\hspace{0.5mm}S}\hspace{-0.5mm}[T_a,T_b,T_c]^{S}_{T_r}\right.$
on the left of a passive braid,
$B'=\left._{\hspace{1mm}0}^{S_l}\hspace{-0.5mm}[(T'_a,T'_b,T'_c)\sigma_X]^{S_r}_{0}\right.$,
with the interaction condition satisfied by $(-)^{T_r}S=S_l$, the
interaction of $B$ and $B'$ results in a braid
$$B''=\left._{\hspace{7.5mm}0}^{(-)^{T_l}S}\hspace{-0.5mm}
[((P^S_{-T_l}(T_a,T_b,T_c))+(P^{(-)^{T_r}S}_{-T_l-T_r}(T'_a,T'_b,T'_c))+(T_l+T_r,\cdot,\cdot))\sigma_{X_l((-)^{T_r}S,-T_l-T_r)X}]^{S_r}_{0}\right.,$$
where $(T_l+T_r,\cdot,\cdot)$ is the short for
$(T_l+T_r,T_l+T_r,T_l+T_r)$.
\label{lemmIntGen}
\end{lemma}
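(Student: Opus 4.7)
The plan is to reduce the general case to Lemma \ref{lemmIntNotwist} by first rotating $B$ so as to absorb the right external twist $T_r$. I would apply $R_{0,-T_r}$ to $B$, which by Eq. \ref{eqRotAlgebraic1} yields the equivalent representation
\begin{equation*}
\tilde B = {}_{T_l}^{S}\!\left[(T_a+T_r,T_b+T_r,T_c+T_r)\sigma_{X_r(S,-T_r)}\right]_{0}^{(-)^{T_r}S}.
\end{equation*}
Since the interaction condition reads $(-)^{T_r}S = S_l$, the right end-node of $\tilde B$ now shares a twist-free edge of matching state with the left end-node of $B'$, so the $2\to 3$ Pachner move applies directly and $B+B' = \tilde B + B'$.

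I would next mirror the argument of Lemma \ref{lemmIntNotwist} by forming the connected sum $\tilde B \# B'$ via concatenation of internal structures. The triple $(T'_a,T'_b,T'_c)$ initially sits between the crossing blocks $X_r(S,-T_r)$ and $X$; pushing it to the left of $X_r(S,-T_r)$ transforms it into $\sigma^{-1}_{X_r(S,-T_r)}(T'_a,T'_b,T'_c)$, producing
\begin{equation*}
\tilde B \# B' = {}_{T_l}^{S}\!\left[\bigl((T_a+T_r,T_b+T_r,T_c+T_r) + \sigma^{-1}_{X_r(S,-T_r)}(T'_a,T'_b,T'_c)\bigr)\sigma_{X_r(S,-T_r)X}\right]_{0}^{S_r}.
\end{equation*}

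I would then apply $R_{-T_l,0}$ to eliminate the remaining left external twist. By Eq. \ref{eqRotAlgebraic1}, this prepends $X_l(S,-T_l)$ to the crossing sequence, left-multiplies the internal triple by $P^S_{-T_l}$, and shifts every internal twist by $+T_l$. Using that $P^S_{-T_l}$ is additive on triples and fixes the constant triple $(T_l+T_r,\cdot,\cdot)$, the output can be matched to the claimed form as long as we establish the two regrouping identities
\begin{equation*}
X_l(S,-T_l)\,X_r(S,-T_r) \equiv X_l\bigl((-)^{T_r}S,\,-T_l-T_r\bigr), \qquad P^S_{-T_l}\,\sigma^{-1}_{X_r(S,-T_r)} \equiv P^{(-)^{T_r}S}_{-T_l-T_r}.
\end{equation*}

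The hard part will be verifying these two identities. Geometrically they express that the composite $R_{0,-T_r}$ followed by $R_{-T_l,0}$ acts on the left crossings and on the left twist-triple exactly as a single left-rotation by $-T_l-T_r$ of a braid whose left end-node state has been flipped to $(-)^{T_r}S$. Algebraically they are established by case analysis on the parities of $T_l$ and $T_r$ using the explicit formulas in Eq. \ref{eqCrossing}, Eq. \ref{eqPermrelation}, and Lemma \ref{lemmPermRel}; the rest of the argument is then routine bookkeeping in the rotation calculus.
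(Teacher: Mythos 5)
Your route is correct but genuinely different from the paper's. The paper realizes the interaction condition with the \emph{simultaneous} rotation $R_{T_r,-T_r}$ rather than your one-sided $R_{0,-T_r}$: by Lemma \ref{lemmCrossingRelation} the crossings generated at the two end-nodes cancel, so $B$ remains a trivial diagram, $\left._{\ T_l+T_r}^{(-)^{T_r}S}\hspace{-0.5mm}[P^S_{T_r}(T_a,T_b,T_c)]^{(-)^{T_r}S}_{0}\right.$ (Eq. \ref{eqAtrivB}); the connected sum is then a bare componentwise addition of twist triples with no $\sigma^{-1}$ conjugation, and the single cleanup rotation $R_{-T_l-T_r,0}$ produces the factors $X_l((-)^{T_r}S,-T_l-T_r)$ and $P^{(-)^{T_r}S}_{-T_l-T_r}$ of the claimed formula directly, the only remaining identity being the composition rule $P^{(-)^{T_r}S}_{-T_l-T_r}P^{S}_{T_r}=P^{S}_{-T_l}$, which follows from Lemma \ref{lemmPermRel}. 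Your decomposition makes the intermediate representative of $B$ non-trivial, and the price is exactly the two regrouping identities you isolate, $X_l(S,-T_l)X_r(S,-T_r)\equiv X_l((-)^{T_r}S,-T_l-T_r)$ and $P^{S}_{-T_l}\sigma^{-1}_{X_r(S,-T_r)}\equiv P^{(-)^{T_r}S}_{-T_l-T_r}$. Both are in fact true --- they express that $R_{-T_l,0}$ composed with $R_{0,-T_r}$ acts on the left-hand data exactly as $R_{-T_l-T_r,0}$ does on the already right-rotated braid, and spot checks confirm them, e.g. $X_l(+,-1)X_r(+,-1)=du=X_l(-,-2)$ and $\sigma^{-1}_{X_r(+,-2)}=\sigma^{-1}_{ud}=(1\ 2\ 3)=P^{+}_{-2}$ --- but they appear nowhere in the paper and constitute the real content of your argument; as written you have only asserted that the parity case analysis goes through. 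If you carry out that (finite) verification, your proof is complete and arguably more explanatory about why the composite rotation data regroups as it does; the paper's choice of $R_{T_r,-T_r}$ simply sidesteps the need for it by never letting the active braid acquire crossings before the connected sum.
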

\begin{proof}
\begin{align}
& B+B'\nonumber\\
&=
\left._{T_l}^{\hspace{0.5mm}S}\hspace{-0.5mm}[T_a,T_b,T_c]^{S}_{T_r}\right.
+
\left._{\hspace{1mm} 0}^{S_l}\hspace{-0.5mm}[(T'_a,T'_b,T'_c)\sigma_X]^{S_r}_{0}\right.\nonumber\\
&\cong
R_{T_r,-T_r}\left(\left._{T_l}^{\hspace{0.5mm}S}\hspace{-0.5mm}[T_a,T_b,T_c]^{S}_{T_r}\right.
\right) \#
\left._{\ 0}^{S_l}\hspace{-0.5mm}[(T'_a,T'_b,T'_c)\sigma_X]^{S_r}_{0}\right.\nonumber\\
&= \left._{\
T_l+T_r}^{(-)^{T_r}S}\hspace{-0.5mm}[P^S_{T_r}(T_a,T_b,T_c)]^{(-)^{T_r}S}_{0}\right.
\# \left._{\
0}^{S_l}\hspace{-0.5mm}[(T'_a,T'_b,T'_c)\sigma_X]^{S_r}_{0}\right.\label{eqAtrivB}\\
&= \left._{\
T_l+T_r}^{(-)^{T_r}S}\hspace{-0.5mm}[((P^S_{T_r}(T_a,T_b,T_c))+(T'_a,T'_b,T'_c))\sigma_X]^{S_r}_{0}\right.
\ \ \ \ \ \ \xLeftarrow{(-)^{T_r}S=S_l} \label{eqForTheoNoActive}\\
&\cong R_{-T_l-T_r,0}\left( \left._{\
T_l+T_r}^{(-)^{T_r}S}\hspace{-0.5mm}[((P^S_{T_r}(T_a,T_b,T_c))+(T'_a,T'_b,T'_c))\sigma_X]^{S_r}_{0}\right.
\right)\nonumber\\
&=
\left._{\hspace{7.5mm}0}^{(-)^{T_l}S}\hspace{-0.5mm}
[(P^{(-)^{T_r}S}_{-T_l-T_r}((P^S_{T_r}(T_a,T_b,T_c))+(T'_a,T'_b,T'_c))+(T_l+T_r,\cdot,\cdot))\sigma_{X_l((-)^{T_r}S,-T_l-T_r)X}]^{S_r}_{0}\right.\nonumber\\
&= \left._{\hspace{7.5mm}0}^{(-)^{T_l}S}\hspace{-0.5mm}
[((P^S_{-T_l}(T_a,T_b,T_c))+(P^{(-)^{T_r}S}_{-T_l-T_r}(T'_a,T'_b,T'_c))+(T_l+T_r,\cdot,\cdot))\sigma_{X_l((-)^{T_r}S,-T_l-T_r)X}]^{S_r}_{0}\right.\label{eqAgenerB},
\end{align}
where the simultaneous rotation $R_{T_r,-T_r}$, is applied to
realize the interaction condition in order to do the connected sum,
and the rotation $R_{-T_l-T_r,0}$ is exerted such that the final
result is in the representative without external twists, which
induces a permutation $P^{(-)^{T_r}S}_{-T_l-T_r}$ and a crossing
sequence $X_l((-)^{T_r}S,-T_l-T_r)$ concatenated to $X$ from left.
The above equation obviously reduces to Eq. \ref{eqIntNoTwist} when
$T_r=0$.
\end{proof}

Nevertheless, for each active braid, there are infinite number of
trivial braid diagrams which are equivalent, in the sense that any
two of them are related by a simultaneous rotation $R_{n,-n}$, $n\in
\mathbb{Z}$. It is then naturally to ask if the choice of the
trivial braid diagram representing a braid equivalence class
influences the interaction of the braid and another braid. The
answer is "No". The reason is obvious because of the equivalence of
the trivial diagrams. However, due to the necessity of realizing of
the interaction condition in a concrete calculation of interaction,
it is better to formulate this claim explicitly in our new notation
as a Lemma.
\begin{lemma}
For any active braid $B$, its interaction onto any other braid $B'$,
i.e. $B+B'$ or $B'+B$, is independent of the choice of the trivial
diagram representing $B$. \label{lemmIntIndepTrivB}
\end{lemma}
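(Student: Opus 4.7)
The plan is to reduce the statement to a single algebraic identity about the permutations $P^S_n$ and then to verify that identity using the composition law for simultaneous rotations. First, I recall from \cite{HackettWan2008} that any two trivial-diagram representatives of the same active braid $B$ are related by a simultaneous rotation $R_{n,-n}$ for some $n\in\mathbb{Z}$. Hence it suffices to show that if $B_n := R_{n,-n}(B)$, then the outputs $B_n+B'$ and $B+B'$ coincide as equivalence classes for every $n\in\mathbb{Z}$ and every partner $B'$, and likewise for $B'+B_n$ versus $B'+B$.

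Next, I would apply Lemma \ref{lemmIntGen} directly to $B_n+B'$, noting that for $B_n$ the left twist is $T_l+n$, the right twist is $T_r-n$, the end-node state is $(-)^nS$, and the internal triple is $P^S_n(T_a,T_b,T_c)$. Walking through the output 8-tuple field by field: the left external state collapses to $(-)^{T_l+n}(-)^nS=(-)^{T_l}S$; the right external state remains $S_r$ and both external twists remain $0$; the common additive triple $(T_l+T_r,\cdot,\cdot)$ is unchanged since $(T_l+n)+(T_r-n)=T_l+T_r$; the $B'$-summand becomes $P^{(-)^{T_r-n}(-)^nS}_{-(T_l+n)-(T_r-n)}(T'_a,T'_b,T'_c)=P^{(-)^{T_r}S}_{-T_l-T_r}(T'_a,T'_b,T'_c)$; and the appended crossing sequence is $X_l((-)^{T_r}S,-T_l-T_r)X$, also unchanged. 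Each of these matches the corresponding field of $B+B'$ as given by Lemma \ref{lemmIntGen}.

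The only field whose invariance does not follow from arithmetic alone is the $B$-contribution to the first triple, which in the $B_n$-calculation reads $P^{(-)^nS}_{-T_l-n}\bigl(P^S_n(T_a,T_b,T_c)\bigr)$ and must equal $P^S_{-T_l}(T_a,T_b,T_c)$ from the $B$-calculation. This reduces to the composition identity
\begin{equation*}
P^{(-)^nS}_{m}\circ P^S_n \;\equiv\; P^S_{m+n},
\end{equation*}
applied with $m=-T_l-n$. I would establish this identity by parity case analysis on $m$ and $n$ using the inverse relations of Lemma \ref{lemmPermRel} together with the tabulation in Eq. \ref{eqPermrelation}; conceptually it records the geometric fact that performing an $n\pi/3$ rotation starting from state $S$ and then an $m\pi/3$ rotation from the resulting state $(-)^nS$ is the same as performing a single $(m+n)\pi/3$ rotation from $S$. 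This composition step is the main obstacle, because the bookkeeping of the state-flip $S\mapsto(-)^nS$ in the upper index of $P$ must be handled carefully for the odd cases.

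Finally, the case $B'+B_n$ is handled by the mirror analog of Lemma \ref{lemmIntGen}, in which one rotates the right end-node of $B'$ to meet the interaction condition and then appends a crossing sequence of the form $X_r(\cdot,\cdot)$ on the right before a final rotation $R_{0,-T_l-T_r}$ restores zero external twists. The same field-by-field comparison, together with the mirror instance $P^S_n\circ P^{(-)^nS}_{-n}\equiv\mathds{1}$ of the composition identity, yields invariance on this side as well, completing the proof.
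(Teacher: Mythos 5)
Your proposal is correct and follows essentially the same route as the paper: reduce to comparing $B_n=R_{n,-n}(B_0)$ with $B_0$, then show the extra rotation's effect cancels via the composition/inverse relations for the permutations $P^S_n$ (the paper writes this as $P^{(-)^nS}_{T_r-n}=P^{S}_{T_r}P^{(-)^nS}_{-n}$ followed by $P^{(-)^nS}_{-n}P^S_{n}\equiv\mathds{1}$ from Lemma \ref{lemmPermRel}, which is exactly your identity $P^{(-)^nS}_{m}\circ P^S_n\equiv P^S_{m+n}$). The only cosmetic difference is that the paper halts the comparison at the intermediate configuration Eq.~\ref{eqAtrivB} right after the interaction condition is realized, rather than matching the final 8-tuple field by field as you do.
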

\begin{proof}
We prove the case of $B+B'$. Let
$B_0=\left._{T_l}^{\hspace{0.5mm}S}\hspace{-0.5mm}[T_a,T_b,T_c]^{S}_{T_r}\right.$
be a trivial diagram representing an active braid $B$. Let
$B'=\left._{\hspace{1mm}0}^{S_l}\hspace{-0.5mm}[(T'_a,T'_b,T'_c)\sigma_X]^{S_r}_{0}\right.$
be the passive braid on which $B$ interacts. We assume the
interaction condition is satisfied by $(-)^{T_r}S=S_l$. Any other
trivial braid, say $B_n$, representing $B$ can be obtained from
$B_0$ by
$$B_n=R_{n,-n}(B_0)=\left._{\hspace{1.4mm}T_l+n}^{(-)^nS}\hspace{-0.5mm}[P^S_n(T_a,T_b,T_c)]^{(-)^nS}_{T_r-n}\right..$$
$B_0+B'$ has already been shown in Lemma \ref{lemmIntGen}, but we
only need Eq. \ref{eqAtrivB} therein, which is the configuration of
the two braid after the interaction condition is realized. If we
replace $B_0$ by $B_n$ in the interaction, we have
\begin{align*}
B_n+B' &=
\left._{\hspace{1.4mm}T_l+n}^{(-)^nS}\hspace{-0.5mm}[P^S_n(T_a,T_b,T_c)]^{(-)^nS}_{T_r-n}\right.
+
\left._{\hspace{1mm}0}^{S_l}\hspace{-0.5mm}[(T'_a,T'_b,T'_c)\sigma_X]^{S_r}_{0}\right.\\
&\cong R_{T_r-n,-T_r+n}\left(
\left._{\hspace{1.4mm}T_l+n}^{(-)^nS}\hspace{-0.5mm}[P^S_n(T_a,T_b,T_c)]^{(-)^nS}_{T_r-n}\right.
\right) \#
\left._{\hspace{1mm}0}^{S_l}\hspace{-0.5mm}[(T'_a,T'_b,T'_c)\sigma_X]^{S_r}_{0}\right.\\
\xRightarrow{(-)^{2n}S=S} &= \left._{\
T_l+T_r}^{(-)^{T_r}S}\hspace{-0.5mm}[P^{(-)^nS}_{T_r-n}P^S_{n}(T_a,T_b,T_c)]^{(-)^{T_r}S}_{0}\right.
\# \left._{\
0}^{S_l}\hspace{-0.5mm}[(T'_a,T'_b,T'_c)\sigma_X]^{S_r}_{0}\right.\\
\xRightarrow{P^{(-)^nS}_{T_r-n}=P^{S}_{T_r}P^{(-)^nS}_{-n}} &=
\left._{\
T_l+T_r}^{(-)^{T_r}S}\hspace{-0.5mm}[P^{S}_{T_r}P^{(-)^nS}_{-n}P^S_{n}(T_a,T_b,T_c)]^{(-)^{T_r}S}_{0}\right.
\# \left._{\
0}^{S_l}\hspace{-0.5mm}[(T'_a,T'_b,T'_c)\sigma_X]^{S_r}_{0}\right.\\
\xRightarrow{P^{(-)^nS}_{-n}P^S_{n}\equiv\mathds{1}\ \text{by Lemma
\ref{lemmPermRel}}} &= \left._{\
T_l+T_r}^{(-)^{T_r}S}\hspace{-0.5mm}[P^S_{T_r}(T_a,T_b,T_c)]^{(-)^{T_r}S}_{0}\right.
\# \left._{\
0}^{S_l}\hspace{-0.5mm}[(T'_a,T'_b,T'_c)\sigma_X]^{S_r}_{0}\right.,
\end{align*}
which is exactly the same as Eq. \ref{eqAtrivB}. That is, if $B_0$
interacts with $B'$, so does $B_n$, and they give rise to the same
result. Likewise, this is also true for the case of $B'+B$. This
closes the proof.
\end{proof}

Now that we established Lemma \ref{lemmIntIndepTrivB}, we may choose
to always represent an active braid $B$ by its trivial
representative without right (left) external twist, in dealing with
the interaction of $B$ onto a passive braid from the left (right),
which simplifies the calculation and expression because Lemma
\ref{lemmIntNotwist} directly applies. Moreover, the result of Lemma
\ref{lemmIntNotwist}, namely Eq. \ref{eqIntNoTwist}, is identical to
the result when the active braid is represented by its unique
representative with zero external twists. This again, together with
\cite{HackettWan2008}, shows that the choice of representative of a
braid does not affect the result of the interaction involving the
braid, in accordance with \cite{LeeWan2007}. Examples can be found
in \cite{LeeWan2007}, one just need to cast them in our new symbolic
notation.

Equipped with this algebra, we shall prove one of our primary
results.
\begin{theorem}
Given an active braid,
$B=\left._{T_l}^{\hspace{0.5mm}S}\hspace{-0.5mm}[T_a,T_b,T_c]^{S}_{T_r}\right.$,
and a passive braid,
$B'=\left._{\hspace{1mm}0}^{S_l}\hspace{-0.5mm}[(T'_a,T'_b,T'_c)\sigma_X]^{S_r}_{0}\right.$,
such that $B''=B+B'$, the effective twist number $\Theta$ is an
additive conserved quantity, while the effective state $\chi$ is a
multiplicative conserved quantity, namely%
\begin{equation}
\begin{aligned}
\Theta_{B''} &=\Theta_B+\Theta_{B'}\\
\chi_{B''} &=\chi_B\chi_{B'}.
\end{aligned}
\end{equation}
\label{theoConserve}
\end{theorem}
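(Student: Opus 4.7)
The plan is to follow the three-step decomposition of the interaction already set up in the proof of Lemma \ref{lemmIntGen}: a preparatory rotation $R_{T_r,-T_r}$ on $B$ to realize the interaction condition algebraically, the connected sum of the rotated $B$ with $B'$, and a final rotation $R_{-T_l-T_r,0}$ to restore the canonical zero-external-twist representative. The first and third steps are immediately handled by Lemma \ref{lemmConserve}, since any rotation preserves both $\Theta$ and $\chi$. The theorem therefore reduces to checking additivity of $\Theta$ and multiplicativity of $\chi$ under the middle step, the connected sum itself.

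For the connected sum, I would read off the defining data of $R_{T_r,-T_r}(B)\#B'$. Its external twists are $T_l+T_r$ on the left and $0$ on the right; its crossing sequence is simply $X$, since the crossings created on the trivial $B$ by $R_{T_r,-T_r}$ collapse via Lemma \ref{lemmCrossingRelation}; its internal twists are $P^S_{T_r}(T_a,T_b,T_c)+(T'_a,T'_b,T'_c)$; and its end-node states are $(-)^{T_r}S$ and $S_r$. Because any permutation in $S_3$ preserves $\sum_i T_i$, substituting into the definition of $\Theta$ gives $(T_l+T_r)+(T_a+T_b+T_c)+(T'_a+T'_b+T'_c)-2\sum_i x_i$, which is precisely $\Theta_B+\Theta_{B'}$ once one notes $\Theta_B=T_l+T_r+T_a+T_b+T_c$ (no crossings in $B$) and $\Theta_{B'}=T'_a+T'_b+T'_c-2\sum_i x_i$ (no external twists in $B'$). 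Similarly, $\chi_{\mathrm{sum}}=(-)^{|X|}(-)^{T_r}S\cdot S_r$; the interaction hypothesis $(-)^{T_r}S=S_l$ reduces this to $(-)^{|X|}S_l S_r=\chi_{B'}$, and since $B$ is active, Theorem \ref{theoChi} gives $\chi_B=1$, so $\chi_B\chi_{B'}=\chi_{B'}=\chi_{\mathrm{sum}}$.

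I do not foresee a real obstacle: the whole proof is a bookkeeping exercise pinned down by Lemma \ref{lemmConserve} (for the two outer rotation steps), Lemma \ref{lemmCrossingRelation} (so that the rotation-induced crossings on the trivial $B$ cancel), and the interaction condition $(-)^{T_r}S=S_l$ (so that the merging end-node states contribute $+1$ to $\chi$). The only point that requires a moment's care is that the rotation of $B$ permutes its three internal twists but leaves their sum invariant, and that $B$'s triviality removes any crossing-number contribution from its side of the $\Theta$ balance. With these in place, the two conservation laws follow by direct substitution into the definitions of $\Theta$ and $\chi$.
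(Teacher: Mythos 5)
Your proposal is correct, and it reaches the same arithmetic as the paper, but it organizes the argument differently in a way worth noting. The paper's proof takes the fully assembled output of Lemma \ref{lemmIntGen} (Eq. \ref{eqAgenerB}), substitutes it into the definitions of $\Theta$ and $\chi$, and then cancels the contributions of the final rotation by hand using Eqs. \ref{eqRotNcro} and \ref{eqRotXvalue} (the key identity being $\sum_{j=1}^{|X_l((-)^{T_r}S,-T_l-T_r)|}x_j=T_l+T_r$). You instead factor the interaction into its three stages and dispatch the two rotation stages wholesale by Lemma \ref{lemmConserve}, so that only the connected sum needs to be checked; there the additivity of $\Theta$ follows because the connected sum adds external twists, adds internal-twist triples (and $S_3$ permutations preserve $\sum_i T_i$), and concatenates crossing sequences, while multiplicativity of $\chi$ follows from the interaction condition $(-)^{T_r}S=S_l$ together with $\chi_B=1$ for a trivial representative. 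Your route is slightly more modular and makes it transparent that the conservation laws are really a statement about the connected sum alone, with the rotations contributing nothing by a result already proved; the paper's route is a single direct substitution that avoids having to re-examine the intermediate stages of Lemma \ref{lemmIntGen}. Both are complete, and the one point you flag as needing care --- that the permutation of $B$'s internal twists leaves their sum invariant and that $B$'s triviality contributes no crossing number --- is exactly the point the paper also relies on.
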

\begin{proof}
We can readily write down
$$\Theta_B=T_l+T_r+\sum\limits_{i=a}^{c}T_i,\ \ \chi_B=1,$$ and
$$\Theta_{B'}=\sum\limits_{i=a}^{c}T_i-2\sum\limits_{j=1}^{|X|}x_j,\ \ \chi_{B'}=(-)^{|X|}S_lS_r.$$
Hence, according to Eq. \ref{eqAgenerB}, we have
\begin{align}
\Theta_{B''} &=\sum\limits_{i=a}^{c}(T_i+T'_i)+3(T_l+T_r)-2\sum\limits_{j=1}^{|X_l|}x_j-2\sum\limits_{k=1}^{|X|}x_k\nonumber\\
&=
\sum\limits_{i=a}^{c}(T_i+T'_i)+(T_l+T_r)-2\sum\limits_{k=1}^{|X|}x_k\nonumber\\
&=\Theta_B+\Theta_{B'}, \label{eqSumTheta}
\end{align}
where the second equality is a result of
$\sum\limits_{j=1}^{|X_l((-)^{T_r}S,-T_l-T_r)|}x_j=T_l+T_r$, by Eq.
\ref{eqRotXvalue}. Besides,%
\begin{equation}
\chi_{B''}=(-)^{|T_l+T_r|+|X|}(-)^{T_l}SS_r=(-)^{|X|}(-)^{T_r}SS_r=(-)^{|X|}S_lS_r=\chi_{B'}=\chi_B\chi_{B'}.%
\eeq
\end{proof}

This theorem demonstrates that the by far discovered two
representative-independent conserved quantities, $\Theta$ and $\chi$
are also conserved under interactions, in the sense that the former
is additive while the latter is multiplicative. This is consistent
to \cite{HackettWan2008} in which only interactions between active
braids are discussed. In particular, $\chi$ becomes the $S^2$ in
\cite{HackettWan2008}, whose conservation means that the interacting
character of the braids is preserved. Furthermore, according to
Theorem \ref{theoChi}, the multiplicative conservation of $\chi$
shows that if the passive braid involved in an interaction has
$\chi=-1$, the resulted braid must also has $\chi=-1$, and is thus a
passive braid too.

\subsection{Asymmetry between $B+B'$ and $B'+B$}
\label{subsecAsymm}%
It is important to note that Theorem
\ref{theoConserve} is also true for the case of interaction where
the active braid is on the right of the passive braid. In fact, all
the discussion above can be equally well applied to this case. One
must then ask a question: does an active braid gives the same result
when it interacts on to a passive braid from the left and from the
right respectively? The answer is "No" in general. We now discuss
this issue by considering an active braid, $B$, and a passive braid,
$B'$.

First of all, even if the interaction condition is met in the case
$B+B'$, there is no guarantee that the interaction condition is also
satisfied in the case of $B'+B$, which means $B'+B$ is simply an
impossible interaction. If we assume the interaction condition can
be realized in both cases, $B+B'$ and $B'+B$ still give rise to
different results, i.e. inequivalent braids, in general. Let us
study the detail.

In the case of $B+B'$, by Lemma \ref{lemmIntIndepTrivB} we can
represent $B$ by its trivial braid diagram without $T_r$, viz
$B\cong B_l=\left._{T_l}^{\hspace{0.5mm}
S}\hspace{-0.5mm}[T_a,T_b,T_c]^{S}_{0}\right.$, which allows us to
use Eq. \ref{eqIntNoTwist}. However, in the case of $B'+B$, we
represent $B$ by its trivial diagram without the left external
twist, which is obtained by a simultaneous rotation from $B_l$, i.e.
$B\cong B_r=R_{-T_l,T_l}(B_l)=\left._{\hspace{7.5mm}0}^{
(-)^{T_l}S}\hspace{-0.5mm}[P^S_{-T_l}(T_a,T_b,T_c)]^{(-)^{T_l}S}_{T_l}\right.$.
Then for
$B'=\left._{\hspace{1mm}0}^{S_l}\hspace{-0.5mm}[(T'_a,T'_b,T'_c)\sigma_X]^{S_r}_{0}\right.$,
the interaction condition in the latter case is $S_r=(-)^{T_l}S$.
With also $S=S_l$, we conclude that the condition for both $B+B'$
and $B'+B$ doable is%
\begin{equation}
S_l=(-)^{T_l}S_r. %
\label{eqDoubleIntCond}
\end{equation}
Given this, by a similar calculation as that in Lemma
\ref{lemmIntNotwist}, one
can find %
\begin{align}
B'+B&
=\left._{\hspace{1mm}0}^{S_l}[((T'_a+T_l,T'_b+T_l,T'_c+T_l)+(\sigma^{-1}_X
P^S_{-T_l}(T_a,T_b,T_c)))\sigma_{XX_r(S_r,-T_l)}]^{(-)^{T_l}S_r}_0\right.\nonumber\\
&=\left._{\hspace{1mm}0}^{S_l}[((T'_a+T_l,T'_b+T_l,T'_c+T_l)+(\sigma^{-1}_X
P^S_{-T_l}(T_a,T_b,T_c)))\sigma_{XX_r(S_r,-T_l)}]^{S_l}_0\right..
\label{eqB'B}
\end{align}
To compare this to $B+B'$, we rewrite Eq. \ref{eqIntNoTwist} as
follows, taking Eq. \ref{eqDoubleIntCond} into account.%
\begin{equation}
B+B'=\left._{\hspace{1.5mm}0}^{S_r}\hspace{-0.5mm}[(P^S_{-T_l}(T_a+T'_a+T_l,T_b+T'_b+T_l,T_c+T'_c+T_l))
\sigma_{X_l(S_l,-T_l)X}]^{S_r}_{0}\right. \label{eqBB'}
\end{equation}
It is important to notice that, by Eq. \ref{eqB'B} and Eq.
\ref{eqBB'}, both $B+B'$ and $B'+B$ are in the representative with
zero external twists. Since we know such a representative is unique
for each equivalence class of braids, there is no way for $B+B'$ and
$B'+B$ to be equivalence to each other. But there are possibilities
for $B+B'$ and $B'+B$ are simply equal. For this to be true, braids
$B$ and $B'$ are pretty strongly constrained.

Firstly, one obviously has to require $S_l=S_r$ and $T_l=2k,\
k\in\mathbb{Z}$ in Eq. \ref{eqB'B} and Eq. \ref{eqBB'}, such that
$B+B'$ and $B'+B$ have the same end-node states. This, together with
the interaction condition, also indicates $S_l=S_r=S$. Keeping this
in mind, one must then demand that $B+B'$ and $B'+B$ have the same
crossing sequence, namely $X_l(S,-T_l)X=XX_r(S,-T_l)$, which can
also be written as $X_l(S,-T_l)XX^{-1}_r(S,-T_l)=X$. With Lemma
\ref{lemmCrossingRelation}, this can be put in the form,
$X_l(S,-T_l)XX_l(S,T_l)=X$. Since Eq. \ref{eqCrossing} reads that
$X_l(S,m)=X_r(S,m)$ for $m$ even, this condition is better expressed
as $$X_l(S,-T_l)XX_r(S,T_l)=X,$$ which now appears to be the
requirement that a simultaneous $(-T_l)$-rotation leaves the
crossing sequence intact. Although it has been conjectured in the
end of Section \ref{secEquiv} that only simultaneous rotations of
$6k,\ k\in\mathbb{Z}$ are able to achieve this condition, we would
like to keep its current general format because the conjectured has
not been proved. However, interestingly, we will see an automatic
input of $6k,\ k\in\mathbb{Z}$ shortly.

From this point on, three possibilities will arise, each of which
leads to a different condition on top of the conditions found above.
The first possibility is that we do not put any constraint on the
internal twists. Hence, according to Eq. \ref{eqB'B} and Eq.
\ref{eqBB'}, such that $B'+B$ equals $B+B'$ we have to demand
$\sigma^{-1}_XP^S_{-T_l}=P^S_{-T_l}\equiv\mathds{1}$. An immediate
consequence is that $\sigma_X=\mathds{1}$, which however does not
constraint the pattern of $X$ very much. Moreover, as we know that
$T_l$ is even, then by Eq. \ref{eqPermrelation},
$P^S_{-T_l}\equiv\mathds{1}$ if and only if $T_l=6j,\
j\in\mathbb{Z}$. With this $T_l$, Lemma \ref{lemm3KsimRot} ensures
the fulfillment of the condition, $X_l(S,-T_l)XX_r(S,T_l)=X$, which
now appears to be redundant in this case.

The second possibility is to relax the first one a little bit by
only requiring $\sigma^{-1}_XP^S_{-T_l}=P^S_{-T_l}$ which only means
that $\sigma_X=\mathds{1}$ but no criteria for $P^S_{-T_l}$. Then we
notice that while in Eq. \ref{eqBB'} both of the triples
$(T_a,T_b,T_c)$ and $(T'_a,T'_b,T'_c)$ are under the same
permutation $P^S_{-T_l}$, in Eq. \ref{eqB'B} only the triple
$(T_a,T_b,T_c)$ is permuted by $P^S_{-T_l}$. Therefore, the only way
to make $B+B'=B'+B$ is to mandate
$P^S_{-T_l}(T'_a,T'_b,T'_c)=(T'_a,T'_b,T'_c)$. Note that this does
not necessarily mean $P^S_{-T_l}=\mathds{1}$, e.g. when
$P^S_{-T_l}=(1,\ 2)$ and $T'_a=T'_b$.

The last possibility is to remove even the condition
$\sigma_X=\mathds{1}$. As a result, the constraint on internal
twists turns out to be stronger than those in the previous two
possibilities. It is not hard to see, we must have in general
$T_a=T_b=T_c$ but still
$P^S_{-T_l}(T'_a,T'_b,T'_c)=(T'_a,T'_b,T'_c)$ in the meanwhile.

We have now exhausted all conditions with the most general
consideration. More importantly, although we restricted our
discussion to the case where $B'$ is a passive braid, the conditions
automatically applies to the case where $B'$ is even an active braid
in its unique representation. Let us summarize this in the following
theorem as another primary result of this paper.
\begin{theorem}
Given an active braid
$B=\left._{T_l}^{\hspace{0.5mm}S}\hspace{-0.5mm}[T_a,T_b,T_c]^{S}_{0}\right.$,
and an arbitrary braid (passive or active) in its unique
representative, namely
$B'=\left._{\hspace{1mm}0}^{S_l}\hspace{-0.5mm}[(T'_a,T'_b,T'_c)\sigma_X]^{S_r}_{0}\right.$,
active or passive, for $B+B'=B'+B$ to be true, we demand%
\begin{equation}
\begin{aligned}
&S_l=S_r=S\\
&T_l=2k,\ k\in\mathbb{Z}%
\end{aligned}
\end{equation}
and%
\begin{equation}
X_l(S,-T_l)XX_r(S,T_l)=X%
\label{eqXingCond}
\end{equation}
and any of the following three:
\begin{enumerate}
\item
    \begin{align}
    &\sigma_X=\mathds{1}\\
    &T_l=6k,\ k\in\mathbb{Z}\label{eqTl6k}
    \end{align}%
\item
    \begin{equation}
    \begin{aligned}
    &\sigma_X=\mathds{1}\\
    &P^S_{-T_l}(T'_a,T'_b,T'_c)=(T'_a,T'_b,T'_c)
    \end{aligned}
    \end{equation}
\item
    \begin{equation}
    \begin{aligned}
    &T_a=T_b=T_c\\
    &P^S_{-T_l}(T'_a,T'_b,T'_c)=(T'_a,T'_b,T'_c).
    \end{aligned}
    \end{equation}
\end{enumerate}%
\label{theoBB'=B'B}
\end{theorem}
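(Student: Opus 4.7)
The plan is to explicitly compute both $B+B'$ and $B'+B$ using Lemma~\ref{lemmIntNotwist} (combined with Lemma~\ref{lemmIntIndepTrivB} to select the most convenient trivial representative of $B$ for each direction of interaction), and then to impose strict equality of the resulting 8-tuples. The key observation that licenses strict equality, rather than mere equivalence, is that in both cases the output lands in the unique zero-external-twist representative of its class, so two such outputs coincide as braids iff they coincide entry-by-entry.

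For $B+B'$, since the given $B$ already has $T_r=0$, the interaction condition becomes $S=S_l$ and Lemma~\ref{lemmIntNotwist} directly produces Eq.~\ref{eqBB'}. For $B'+B$, I invoke Lemma~\ref{lemmIntIndepTrivB} to replace $B$ by its equivalent trivial diagram $R_{-T_l,T_l}(B)$, which has vanishing left external twist; the interaction condition at the meeting node then reads $S_r=(-)^{T_l}S$, and the right-interaction analogue of Lemma~\ref{lemmIntNotwist} yields Eq.~\ref{eqB'B}. Equating end-node states first forces $S_l=S_r$, and combining with the two interaction conditions gives $S_l=S_r=S$ together with $T_l=2k$. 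With $T_l$ even, Eq.~\ref{eqCrossing} gives $X_l(S,-T_l)=X_r(S,-T_l)$, so matching crossing sequences yields $X_l(S,-T_l)X=XX_r(S,-T_l)$; a single application of Lemma~\ref{lemmCrossingRelation} rewrites this as Eq.~\ref{eqXingCond}.

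The substantive step is matching the internal-twist triples, and this is where the case split arises. Using that $P^S_{-T_l}$ acts linearly on triples and fixes any constant triple, the $(T_l,T_l,T_l)$ additive offsets drop out on both sides, and the condition collapses to the linear relation
\begin{equation*}
(\mathds{1}-\sigma^{-1}_X)\,P^S_{-T_l}(T_a,T_b,T_c)\;=\;(\mathds{1}-P^S_{-T_l})(T'_a,T'_b,T'_c).
\end{equation*}
The three enumerated cases are the natural ways to force both sides to vanish simultaneously, without assuming any relation between the unknowns of $B$ and the unknowns of $B'$: (i) annihilate both operators, i.e.\ $\sigma_X=\mathds{1}$ and $P^S_{-T_l}\equiv\mathds{1}$, the latter requiring $T_l=6k$ by Eq.~\ref{eqPermrelation} (consistent with the already obtained $T_l=2k$); (ii) kill the left side by $\sigma_X=\mathds{1}$ and impose directly that $(T'_a,T'_b,T'_c)$ is $P^S_{-T_l}$-fixed; (iii) make $P^S_{-T_l}(T_a,T_b,T_c)$ a constant triple via $T_a=T_b=T_c$, which is automatically stabilized by $\sigma^{-1}_X$, together with the same fixing condition on $(T'_a,T'_b,T'_c)$.

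The main obstacle is the last step: one must argue carefully that these three cases are exhaustive among the ``most general'' solutions, ruling out hybrid cancellations in which a nontrivial value on the left of the displayed equation matches a coincidentally equal nontrivial value on the right through an \emph{ad hoc} dependence between $(T_a,T_b,T_c)$ and $(T'_a,T'_b,T'_c)$. Such dependences cannot be imposed in a theorem whose hypotheses treat $B$ and $B'$ as independent inputs, so the classification reduces to the kernel of each operator factor, yielding exactly the three cases listed. Everything else is direct algebraic bookkeeping from the earlier lemmas.
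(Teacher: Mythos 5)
Your proposal is correct and follows essentially the same route as the paper: compute $B+B'$ and $B'+B$ via Lemma~\ref{lemmIntNotwist} and Lemma~\ref{lemmIntIndepTrivB}, use the uniqueness of the zero-external-twist representative to force entry-by-entry equality, and derive the end-node, crossing-sequence, and three internal-twist conditions exactly as in Section~\ref{subsecAsymm}. Your single displayed relation $(\mathds{1}-\sigma^{-1}_X)P^S_{-T_l}(T_a,T_b,T_c)=(\mathds{1}-P^S_{-T_l})(T'_a,T'_b,T'_c)$ is a compact restatement of the paper's verbal case analysis, not a different argument.
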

An important remark is that Theorem \ref{theoBB'=B'B} is based on
the assumption that Conjecture \ref{conj3KsimRot} may be incorrect.
If Conjecture \ref{conj3KsimRot} happen to be correct (there is a
strong evidence that it is indeed so), then the conditions in
Theorem \ref{theoBB'=B'B} should be modified as follows. The
satisfaction of Eq. \ref{eqXingCond} indicates $T_l=6k,\
k\in\mathbb{Z}$, which immediately ensures that Eq. \ref{eqTl6k}
holds and that $P^S_{-T_l}\equiv\mathds{1}$. Therefore, by simple
logic the condition that $B+B'=B'+B$, if Conjecture
\ref{conj3KsimRot} stands, is reduced to:
\begin{equation}
\begin{aligned}
&S_l=S_r=S\\
&T_l=6k,\ k\in\mathbb{Z}%
\end{aligned}
\end{equation}
and either
\begin{equation}
\sigma_X=\mathds{1}
\end{equation}
or
\begin{equation}
T_a=T_b=T_c.
\end{equation}

The discussion above focuses on how $B+B'$ is equal to $B'+B$.
Nevertheless, since \cite{HeWan2008a} discovered discrete
transformations of braids, which are mapped to $\mathcal{C}$,
$\mathcal{P}$, $\mathcal{T}$, and their products, and discussed
their actions on braid interactions, $B+B'$ and $B'+B$ may be just
related by a discrete transformation. As pointed out in
\cite{HeWan2008a}, the transformations $\mathcal{P}$, $\mathcal{T}$,
$\mathcal{CP}$, and $\mathcal{CT}$ swap the two braids undergoing an
interaction. That is, for example,
$\mathcal{P}(B+B')=\mathcal{P}(B')+\mathcal{P}(B)$. If $B$ and $B'$
happen to be invariant under $\mathcal{P}$, $B'+B$ is then equal to
$\mathcal{P}(B+B')$. We will not involve the detailed mathematical
formats of these transformations, which can be found in
\cite{HeWan2008a}; rather, we list below the conditions for this to
be true.
\begin{equation}
B'+B=\left\{ %
\begin{array}
[c]{ll}%
\mathcal{P}(B+B'),&\ \ B=\mathcal{P}(B),\ B'=\mathcal{P}(B') \medskip\\
\mathcal{T}(B+B'),&\ \ B=\mathcal{T}(B),\ B'=\mathcal{T}(B') \medskip\\
\mathcal{CP}(B+B'),&\ \ B=\mathcal{CP}(B),\ B'=\mathcal{CP}(B') \medskip\\
\mathcal{CT}(B+B'),&\ \ B=\mathcal{CT}(B),\ B'=\mathcal{CT}(B').
\end{array}
\right.
\end{equation}

\subsection{The algebraic structure}
With the help of this notation and the algebraic method established
on it we are able to show that the set of all stable braids, namely
the active braids, propagating braids, and stationary braids, form
an algebra under the braid interaction. This algebra is closed. The
reason is that any interaction of the type defined in
\cite{LeeWan2007} of two stable braids never leads to an instable
braid due to the stability condition put forward in
\cite{LeeWan2007, Isabeau2008}.

In \cite{HackettWan2008} it is demonstrated that an interaction
between two active braids always results in another active braid. On
the other hand, interactions between active and passive braids turn
out to be more complicated and involved. However, provided with all
the discussion in previous sections, we can try to answer the
question raised at the beginning of Section \ref{secInt}. This
question can be first partly answered by the following Theorem.

\begin{theorem}
The resulted braid of any interaction between an active braid and a
passive braid is again passive.%
\label{theoNoActive}
\end{theorem}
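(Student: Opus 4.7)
My plan is to proceed by contradiction. Suppose $B'' = B + B'$ is active, where (without loss of generality, by the symmetric argument for the other orientation) $B$ sits on the left of $B'$. The strategy is to use the explicit interaction formula to rewrite $B''$ as an equivalent braid diagram that shares the crossing sequence and end-node states of $B'$; the assumed activity of $B''$ would then force $B'$ itself to be active, contradicting its passivity.

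Concretely, by Lemma \ref{lemmIntGen} (Eq. \ref{eqAgenerB}), $B''$ in its unique zero-external-twist representative has crossing sequence $X_l((-)^{T_r}S, -T_l-T_r)\,X$ and left end-node state $(-)^{T_l}S$, where $X$ is the crossing sequence of $B'$ and $(-)^{T_r}S = S_l$ by the interaction condition. I would then apply the rotation $R_{T_l+T_r,0}$ to $B''$, which is the inverse of the rotation that installed the initial segment $X_l((-)^{T_r}S, -T_l-T_r)$ in the proof of Lemma \ref{lemmIntGen}. The newly prepended segment $X_l((-)^{T_l}S, T_l+T_r)$ exactly cancels that initial segment; this cancellation is a short case check from Eq. \ref{eqCrossing} in each parity of $T_l+T_r$, dual in spirit to Lemma \ref{lemmCrossingRelation}. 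The upshot is an equivalent representation of $B''$ whose crossing sequence is precisely $X$ and whose left/right end-node states are $(S_l, S_r)$, matching those of $B'$, and differing from $B'$ only in external and internal twists.

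Now, if $B''$ is active, this rotated representation is still equivalent to a trivial diagram, so the crossings $X$ can be peeled off by successive $\pi/3$ rotations of the end-nodes. But reducibility by such rotations is a local, symbolic property of the triple $(S_l, X, S_r)$ alone: each reduction step removes an outermost crossing and flips the adjacent end-node state while merely translating and permuting the twists. Consequently, the same reduction sequence applied to $B'$ — which shares $(S_l, X, S_r)$ — would also reduce $B'$ to a trivial diagram, making $B'$ active and yielding the desired contradiction. The main delicacy to verify is precisely this insensitivity of the reducibility of $(S_l, X, S_r)$ to the accompanying internal and external twists; this follows from the definitions of left-/right-reducibility in Section 2 together with the graphic rules of \cite{Wan2007, LeeWan2007}, and once granted, the rest of the argument is routine. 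As a small bonus, in the subcase $\chi_{B'} = -1$ the conclusion is instantaneous from Theorem \ref{theoConserve} and Theorem \ref{theoChi}, since $\chi_{B''} = \chi_B \chi_{B'} = -1$ forces $B''$ passive; the structural argument above is needed only to cover the case $\chi_{B'} = 1$.
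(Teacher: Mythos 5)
Your proposal is correct and follows essentially the same route as the paper: both argue by contradiction that the interaction result shares the triple $(S_l, X, S_r)$ with $B'$ (the paper simply stops at Eq.~\ref{eqForTheoNoActive} before the final rotation rather than rotating back as you do), both rest on the fact that two-way complete reducibility is determined by that triple while a passive braid is never two-way completely reducible, and both note the quick $\chi_{B'}=-1$ shortcut via Theorems~\ref{theoChi} and~\ref{theoConserve}.
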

\begin{proof}
Recalling Theorem \ref{theoChi} and Theorem \ref{theoConserve},
since the effective state $\chi$ is a multiplicative conserved
quantity under interaction and an active braid must have $\chi=1$,
the interaction of an active braid and any passive braid with
$\chi=-1$ must leads to a passive braid with $\chi=-1$.

However, this is not a complete proof because a passive braid may
also have $\chi=1$. A full proof can be easily constructed. by
contradiction. For this purpose, we need the following facts,
extracted from \cite{LeeWan2007}, of passive braids:
\begin{enumerate}
\item A stationary braid is neither left nor right completely
reducible.
\item A (left-) right-propagating braid is never
completely-reducible from its (right) left end-node.
\item A two-way propagating braid is not completely-reducible from
either end-node; otherwise it must be both left and right
completely-reducible, which makes it an active braid if equipped
with appropriate twists.
\end{enumerate}

Now let us consider an active braid $B$ in an arbitrary trivial
representative and a passive braid $B'$ in its unique
representative, with the interaction condition met, their
interaction, say $B+B'$ (the case of $B'+B$, if possible, will
follow similarly), results in, by Eq. \ref{eqForTheoNoActive}, with
internal twists ignored because they are irrelevant,%
\begin{equation}
B+B'=\left._{\
T_l+T_r}^{\hspace{7mm}S_l}\hspace{-0.5mm}[(\cdot,\cdot,\cdot)\sigma_X]^{S_r}_{0}\right.%
\label{eqTheoNoActive}
\end{equation}
We do not apply a rotation of the left end-node of $B+B'$ because
things will become less transparent otherwise. Note that according
to Eq. \ref{eqTheoNoActive}, at this stage, the two end-nodes of
$B+B'$ are in the same states respectively as those of $B'$ before
the interaction, and also that it has the same crossing sequence $X$
as $B'$ does. This means the irreducibility of $B+B'$ respects that
of $B'$.

We know that an active braid must be both left and right
completely-reducible. Now that $B'$ is passive, it is never
completely-reducible from both sides, which means $B+B'$ is not
either by Eq. \ref{eqTheoNoActive}. Otherwise, $B'$ should be
two-way completely reducible in the first place, which is
contradictory to any basic facts listed above of a passive braid.
Therefore, the theorem holds.
\end{proof}

Since an interaction of two active braids gives rise to active
braids only, as aforementioned, Theorem \ref{theoNoActive} then
immediately entitles the set of active braids a subalgebra of the
algebra of stable braids.

We still need to discuss if the interaction of an active braid and a
passive propagating (stationary) braid creates a propagating
(stationary) braid. The answer is "Not always". To illustrate this,
we show two examples.

Let us consider an active braid,
$B=\left._{-1}^{\hspace{1.5mm}+}\hspace{-0.5mm}[1,1,-1]^{+}_{0}\right.$,
and a left-propagating braid,
$B_p=\left._{\hspace{0.7mm}0}^{+}\hspace{-0.5mm}[(-5,-5,1)\sigma_{ud^{-1}}]^{+}_{0}\right.$,
whose graphical presentations can be found in \cite{LeeWan2007,
HeWan2008a}. Hence, by Eq. \ref{eqIntNoTwist} we have%
\begin{align}
B+B_p&=\left._{\hspace{0.5mm}0}^{-}\hspace{-0.5mm}[(P^+_1(-5,-5,-1))\sigma_{u^{-1}ud^{-1}}]^{+}_{0}\right.\nonumber\\
\xRightarrow{P^+_1=(1,\ 2)}
&=\left._{\hspace{0.5mm}0}^{-}\hspace{-0.5mm}[(-5,-5,-1)\sigma_{d^{-1}}]^{+}_{0}\right.,\label{eqEx1}
\end{align}
which is an irreducible braid according to \cite{Wan2007}, and is
thus stationary. This example shows the interaction of an active
braid and a propagating braid can result in a stationary braid. The
reason for such a situation to arise is due to the pair cancelation
of crossings and the change of the end-node state, as a consequence
of the interaction, which is lucid in the example above.

On the other hand, An active braid and a stationary can also produce
a propagating braid via their interaction. For this sake, we can use
the braid in Eq. \ref{eqEx1} as the stationary one and name it
$B_s$, and consider an active braid
$B=\left._{\hspace{1mm}1}^{-}\hspace{-0.5mm}[-1,-1,1]^{-}_{0}\right.$.
Also by Eq. \ref{eqIntNoTwist} we obtain
\begin{align}
B+B_s &=\left._{\hspace{0.5mm}0}^{+}\hspace{-0.5mm}[(P^-_{-1}(-5,-5,1))\sigma_{ud^{-1}}]^{+}_{0}\right.\nonumber\\
\xRightarrow{P^-_{-1}=(1,\ 2)}
&=\left._{\hspace{0.5mm}0}^{+}\hspace{-0.5mm}[(-5,-5,1)\sigma_{ud^{-1}}]^{+}_{0}\right.,
\end{align}
which is the very $B_p$ in the previous example.

Above all, the set of all stable braids form an algebra with
interaction as the binary operation, which is associative, of the
algebra. Stable braids are local excitations of embedded spin
networks which are considered to be basis states describing the
fundamental space-time. Consequently, a physical state is usually a
superposition of these basis states. It is therefore clear that
braid interaction, as the binary operation of the Algebra of stable
braids, is bilinear. Within this algebra, the set of active braids
behaves as a subalgebra. In addition, due to the asymmetry between
left and right interactions elaborated in Section \ref{subsecAsymm},
this algebra is noncommutative.

\section{Conclusion, discussion and outlook}
Conservation laws play a pivotal role in revealing the underlying
structure of a physical theory. By means of invariants and conserved
quantities we are able to determine how the content of the theory
relates to particle physics, or what kind of new mathematical and/or
physical inputs are necessary so that the theory is meaningful.

We have generalized the algebraic notation of active braids,
proposed in \cite{HackettWan2008}, to all our braids, found a set of
equivalence relations relating them, and developed conserved
quantities associated with these relations. More importantly, by
means of this notation we studied the interaction between active
braids and passive braids, which leads us to the fact that the set
of all stable braids forms a noncommutative algebra, with a
subalgebra containing only active braids. From this we have found
both additive and multiplicative conserved quantities of braids
under interaction. These are not only dynamically conserved but also
conserved under the equivalence moves.

A possible next step is to determine which of these conserved
quantities may correspond to quantum numbers, together with the
results for interactions of braids found in this paper, to fully
classify the set of braids. These conserved quantities find an
application in a companion paper\cite{HeWan2008a} of us, in which
discrete transformations of our braids have been discovered and
mapped to charge conjugation, parity, time reversal and their
products. The results of interactions also stimulate another work in
this direction\cite{LeeWan2008}.

The ultimate physical content of our braids cannot be fully
understood at this stage. In \cite{LouNumber} regarding the braids
of 3-valent embedded spin networks, a tentative mapping between the
3-valent braids and Standard Model particles is proposed, with
however the absence of dynamics. In the 4-valent case, as also
discussed in the companion paper\cite{HeWan2008a}, such a direct
mapping, if not impossible, is at least obscure at this level of
understanding of the braids. A reason is that the dynamics, namely
the propagation and interaction of 4-valent braids, strongly
constraints the possible set of twists, crossing sequence, and
ene-node states of a braid for it to be propagating and/or
interacting. In addition, the closed form of this constraint is
still missing. Consequently, one is not supposed to assign a
4-valent braid any topological property just for it to be possibly a
Standard Model particle, which is what has been done in the 3-valent
case. More study and in particular maybe new mathematical tools are
needed to reveal whether the 4-valent can directly correspond to
Standard Model particles.

If our 4-valent braids are more fundamental entities than the
Standard Model particles are, then what do they correspond to, what
do their interactions mean actually, and how do they give rise to
Standard Model particles under certain semi-classical limit? These
are profound but natural questions to ask. However, our current
understanding of 4-valent braids has not provided sufficient
knowledge to give an answer. The realm of braids of spin networks is
enormous, and a great deal of future work must be done.

For example, in our study we have yet not included spin network
labels which are normally representations of gauge groups, or of the
quantum groups of the corresponding gauge groups. This may make one
misunderstand that the properties of braids are independent of the
spin network they live on. This is nevertheless not true. On the one
hand, although that a braid is propagating and/or interacting
depends on its topological setting, whether it can indeed propagate
away from its location and/or interact with its adjacent braids
depends on the structure of its neighborhood and hence of the whole
spin network it is on. On the other hand, when spin network labels
are taken into account, a braid becomes manifestly dependent of its
spin network, with only its topological properties unchanged. Braids
of the same topology but different set of spin network labels would
be considered physically different, though maybe not different
particles.

Moreover, with spin network labels, a dynamical move, e.g. a
$2\rightarrow 3$ move, may have a superposition of outcomes in
identical topological configuration but different set of spin
network labels; each outcome has a certain probability amplitude.
However, the original set of topological quantities of a braid which
is essential for the braid to be propagating and interacting is
still valid even after spin network labels are considered.

We know that the very notion of particles in local quantum field
theories depends on the background geometry of the theories. Our
braids also depend on the spin network they live on. In spite of the
fact that how to take a physically meaningful semi-classical limit
of our approach remains an open question, the matter particles
resulting from the braids in a semi-classical background as a
reasonable such limit of superposed spin networks would be expected
to depend on the background geometry as well.

It is therefore very interesting and necessary to study the effects
of spin networks in our future research. Our companion
paper\cite{HeWan2008a} also indicates, from another perspective, the
necessity of taking into account spin network labels.

\section*{Acknowledgements}
We are grateful to Yong-Shi Wu for his helpful discussion,
especially his stimulation of a very interesting future direction
along this research line. YW is indebted to his Advisor, Lee Smolin,
for his encouragement and critical comments. Thanks also go to
Jonathan Hackett for discussions. SH is also grateful to the
hospitality of the Perimeter Institute. Research at Perimeter
Institute for Theoretical Physics is supported in part by the
Government of Canada through NSERC and by the Province of Ontario
through MRI. Research at Peking University is supported by NSFC
(nos.10235040 and 10421003).

\end{document}